\keywords{Traits, Correctness-by-Construction, Formal Methods, Post-hoc Verification}
\tikzset{
	>=stealth',
	punktchain/.style={
		rectangle, 
		rounded corners, 
		draw=black, very thick,
		text width=15.5em, 
		minimum height=2em, 
		text centered, 
		on chain},
	line/.style={draw, thick, <-},
	element/.style={
		tape,
		top color=white,
		bottom color=blue!50!black!60!,
		minimum width=8em,
		draw=blue!40!black!90, very thick,
		text width=10em, 
		minimum height=3.5em, 
		text centered, 
		on chain},
	every join/.style={->, thick,shorten >=1pt},
	decoration={brace},
	tuborg/.style={decorate},
	tubnode/.style={midway, right=2pt},
}
\newcommand\hl[1]{\textcolor{black}{#1}}
\newtheorem{Rule}[thm]{Rule}
\newtheorem{definition}[thm]{Definition}
\newtheorem{lemma}[thm]{Lemma}
\newtheorem{theorem}[thm]{Theorem}
\newcommand\tcb@cnt@definitionboxautorefname{Definition}
\definecolor{pblue}{rgb}{0.13,0.13,1}
\definecolor{pgreen}{rgb}{0,0.5,0}
\definecolor{pred}{rgb}{0.9,0,0}
\definecolor{pgrey}{rgb}{0.46,0.45,0.48}
\definecolor{javapurple}{rgb}{0.5,0,0.35}
\newlength{\MaxSizeOfLineNumbers}
\newcommand\Q\lstinline
\newcommand{\hoare}[3]{\mathtt{\{#1\} \; #2 \; \{#3\}}}
\newcommand{\E}{\ensuremath{\mathcal{E}}}
\newcommand{\TD}{\ensuremath{\mathit{TD}}}
\newcommand{\CD}{\ensuremath{\mathit{CD}}}
\newcommand{\Body}{\ensuremath{\mathit{Body}}}
\newcommand{\MH}{\ensuremath{\mathit{MH}}}
\newcommand{\vs}{\ensuremath{\mathit{vs}}}
\newcommand{\es}{\ensuremath{\mathit{es}}}
\newcommand{\Cs}{\ensuremath{\mathit{Cs}}}
\newcommand{\Ds}{\ensuremath{\mathit{Ds}}}
\newcommand{\Ms}{\ensuremath{\mathit{Ms}}}
\newcommand{\Name}{\ensuremath{\mathit{Name}}}
\newcommand{\OK}{\ensuremath{\mathit{OK}}}
\newcommand{\Pre}{\ensuremath{\mathit{Pre}}}
\newcommand{\Post}{\ensuremath{\mathit{Post}}}
\newcommand{\corc}{\textsc{CorC}\xspace}
\newcommand{\CbCBlock}{\textsc{CbC-Block}\xspace}
\newcommand{\TraitCbC}{\textsc{TraitCbC}\xspace}
\newcommand{\newKW}{\ensuremath{\mathtt{new}}}
\newcommand{\methodKW}{\ensuremath{\mathtt{method}}}
\newcommand{\result}{\ensuremath{\mathtt{result}}}
\newcommand{\this}{\ensuremath{\mathtt{this}}}
\newcommand{\interface}{\ensuremath{\mathtt{interface}}}
\newcolumntype{b}{X}
\newcolumntype{n}{>{\hsize=.4\hsize}X}
\newcolumntype{s}{>{\hsize=.2\hsize}X}
\newcolumntype{Y}{>{\centering\arraybackslash}X}
\newcommand\cellwidth{\TX@col@width}
\let\tx@\TX@endtabularx
\def\restoretx{\let\TX@endtabularx\tx@}
\begin{document}
\renewcommand{\thelstlisting}{\arabic{lstlisting}}

\title{Flexible Correct-by-Construction Programming}

\author{Tobias Runge\lmcsorcid{0000-0002-9154-7743}}[a,b]
\author{Tabea Bordis\lmcsorcid{0009-0003-2886-0862}}[a,b]
\author{Alex Potanin\lmcsorcid{0000-0002-4242-2725}}[d]
\author{Thomas Thüm\lmcsorcid{0000-0001-8069-9584}}[e]
\author{Ina Schaefer\lmcsorcid{0000-0002-7153-761X}}[a,b,c]

\address{Institute of Information Security and Dependability (KASTEL), Karlsruhe Institute of Technology, Germany}
\email{\{tobias.runge, tabea.bordis, ina.schaefer\}@kit.edu}

\address{Institute of Software Engineering and Automotive Informatics, TU Braunschweig, Germany}

\address{School for Data-Science and Computational Thinking, Stellenbosch University, South Africa}

\address{School of Computing, Australian National University, Australia}
\email{alex.potanin@anu.edu.au}

\address{Institute of Software Engineering and Programming Languages, University of Ulm, Germany}
\email{thomas.thuem@uni-ulm.de}


\begin{abstract}
Correctness-by-Construction (CbC) is an incremental program construction process to construct functionally correct programs. The programs are constructed stepwise along with a specification that is inherently guaranteed to be satisfied. CbC is complex to use without specialized tool support, since it needs a set of predefined refinement rules of fixed granularity which are additional rules on top of the programming language. Each refinement rule introduces a specific programming statement and developers cannot depart from these rules to construct programs.
CbC allows to develop software in a structured and incremental way to ensure correctness, but the limited flexibility is a disadvantage of CbC.
In this work, we compare classic CbC with \CbCBlock and \TraitCbC. Both approaches \CbCBlock and \TraitCbC, are related to CbC, but they have new language constructs that enable a more flexible software construction approach. We provide for both approaches a programming guideline, which similar to CbC, leads to well-structured programs.
\CbCBlock extends CbC by adding a refinement rule to insert any block of statements.
Therefore, we introduce \CbCBlock as an extension of CbC. \TraitCbC implements correctness-by-construction on the basis of traits with specified methods.
We formally introduce \TraitCbC and prove soundness of the construction strategy.
All three development approaches are qualitatively compared regarding their programming constructs, tool support, and usability to assess which is best suited for certain tasks and developers.
\end{abstract}

\maketitle

\section{Introduction}
\emph{Correctness-by-Construction} (CbC)~\cite{dijkstra-book,gries-book,kourie2012correctness,morgan-book} is a methodology in the field of formal methods to incrementally construct functionally correct programs guided by a pre-/postcondition specification.\footnote{The approach should not be confused with other CbC approaches such as CbyC of Hall and Chapman~\cite{hall2002correctness}. CbyC is a software development process that uses formal modeling techniques and analysis for various stages of development (architectural design, detailed design, code) to detect and eliminate defects as early as possible~\cite{Chapman:2006}. We also exclude data refinement from abstract data types to concrete ones during code generation as for example in Isabelle/HOL~\cite{Haftmann2013}.} In contrast to post-hoc verification, where a program is typically specified and verified after implementing it, CbC is based around successively creating a program together with the specification. This is achieved by applying refinement rules from a small set of defined rules where in each refinement step, an abstract statement (i.e., a hole in the program) is refined to a more concrete implementation that can still contain some nested abstract statements. While refining the program, the correctness of the whole program is guaranteed through applicability conditions that are defined in the refinement rules. The construction ends when no abstract statement is left.

The underlying idea of this specification-first, refinement-based approach is that developers are forced to think about their algorithm more thoroughly rather than having a trial-and-error verification approach. This trial-and-error verification can oftentimes be experienced with post-hoc verification because programs are implemented first and therefore not well-structured for the verification process which leads to tedious verification work. Additionally, through the structured reasoning discipline that is enforced by the refinement rules in CbC, errors are more likely to be detected earlier in the design process, and it is argued that program quality increases and verification effort is reduced~\cite{kourie2012correctness,watson2016correctness}.


Despite these benefits, CbC intuitively has a drawback: The flexibility of creating a program is limited to the set of refinement rules and the rigid, rule-based construction process of applying one rule at a time. This is even increased by the granularity of the rules which explicitly only allow to use one language construct at a time (e.g., one assignment to a variable). Additionally, the refinement rules extend the programming language (i.e., refinement rules are an additional linguistic construct to transform programs), and therefore special tool support (e.g., \corc~\cite{runge2019tool,bordis2022recorc}) is necessary to introduce the CbC refinement process to a programming language. As a result, the barrier to construct programs using CbC is large because the approach at first seems unintuitive and requires effort, knowledge, and special tool support.


In this article, we introduce two alternative correctness-by-construction development approaches that relax the inflexible CbC construction approach without losing the benefits of CbC itself. Both introduce more flexible language constructs to create programs which allow to condense construction steps that tackle the complex and strict programming style of CbC.
The goal is to propose a usable CbC apporach that offers reasonable constructs to develop programs correctly. Therefore, we qualitatively discuss our two proposed approaches and the original CbC approach regarding their programming constructs, tool support, and usability to assess their benefits and drawbacks.

First, we present \CbCBlock which adds new refinement rules. This introduction of new refinement rules should not be seen as a further restriction, but as a relaxation of the procedure.
These new refinement rules increase the ways in which programs can be developed as they allow to refine abstract statements to a specified block of code that fulfills its specification. 
This basically means that this block can contain multiple assignments, selections, or loops whereas with classic CbC for each assignment, selection, and loop a new refinement step is needed. 
Initially, a block is just an abstract placeholder, 
but it has a pre-/postcondition specification so that the introduced specification of the block can be checked against the specification of the refined abstract statement. 
In a next step, the block is instantiated by some code, and it is directly proved that this code fulfills its own specification. The idea of the block rules is similar to a method call, but a block can alter local variables in the method under construction. A block of code can contain further blocks which can be subsequently refined. Consequently, any nesting of blocks may occur.
\CbCBlock is implemented as extension of the \corc tool support.

Second, we present \TraitCbC which is a new software development approach that enables correct-by-construction development by method abstraction and method composition without relying on refinement rules and special tool support.
\hl{\TraitCbC uses \emph{traits}~\cite{ducasse2006traits}, which are a flexible object-oriented language construct supporting a rich form of modular code reuse orthogonal to inheritance.
A trait is a set of \emph{concrete} or \emph{abstract} methods (i.e., the method has either a body or has no body).\footnote{Java interfaces with default methods are a good approximation for what has been called trait in the literature}
Traits can be composed into a larger trait or into a class that contains all methods of all composed traits.}
Trait composition exists as a direct concept of the programming language~\cite{ducasse2006traits} instead of being a program transformation concept, such as the CbC refinement rules.
On the basis of traits, \TraitCbC introduces a programming
guideline for an incremental program construction approach
that guarantees that the resulting program is correct by construction. A construction step comprises the development of a method and direct composition with the existing code base to ensure correctness.
\TraitCbC allows the implementation of any method size and complexity as long as the methods are composable with respect to their specification. Even with this flexibility, \TraitCbC keeps the advantages of a structured incremental development approach.

The contribution of our article is to demonstrate and compare the range of possibilities to develop programs correct by construction from strict rule-based CbC to the more relaxed \CbCBlock to \TraitCbC without any refinement rules. 
In this article, we introduce \TraitCbC and explain the typing, reduction, and flattening rules. We give a proof that \TraitCbC guarantees to develop programs correct by construction. We also present the \CbCBlock approach with the block refinement rules. 
All approaches are implemented in the \corc~\cite{runge2019tool} tool support.
We compare and discuss classic CbC, \CbCBlock, and \TraitCbC qualitatively to assess their benefits and drawbacks.
This article extends previous work~\cite{runge2022traits} by introducing the typing and reduction rules of \TraitCbC in detail. The soundness proof of \TraitCbC is also presented in this article. \CbCBlock is a new approach that has not been presented before.

\section{Correctness-by-Construction}

Classic correctness-by-construction (CbC)~\cite{dijkstra-book,kourie2012correctness,morgan-book} is an incremental approach to construct programs. 
CbC uses a \textit{Hoare triple} specification $\hoare{P}{S}{Q}$ stating that if the precondition $\mathtt{P}$ holds, and the statement $\mathtt{S}$ is executed, then the statement terminates and postcondition $\mathtt{Q}$ holds.
The CbC refinement process starts with a Hoare triple where the statement $\mathtt{S}$ is abstract. This abstract statement can be seen as a hole in the program that needs to be filled. 
With a set of refinement rules, an abstract statement is replaced by more concrete statements (i.e., statements in the guarded command language~\cite{dijkstra-book} that can contain further abstract statements). The process stops, when all abstract statements are refined to concrete statements so that no holes remain in the program. 
As each refinement rule is sound and each correct application of a refinement rule guarantees to satisfy the starting Hoare triple, the resulting program is correct by construction~\cite{kourie2012correctness}.
The CbC approach is strictly tied to this set of predefined refinement rules. A developer cannot deviate from this concept.

In Definition \ref{def:refinementRules}, we present the eight refinement rules of CbC by Kourie and Watson~\cite{kourie2012correctness}. 
The concrete program statements are written in the guarded command language~\cite{Dijkstra:1975}. 
To apply a refinement rule, it has to be checked that side conditions of the rule application are satisfied. This is done by pen-and-paper or with specialized tools~\cite{runge2019tool}.
For example, the skip rule introduces an empty statement that does not alter the program state. This refinement is applicable if and only if the precondition $P$ implies the postcondition $Q$. The composition rule splits the Hoare triple $\{P\}S\{Q\}$ into two Hoare triples by using an intermediate condition $M$. This refinement is applicable, if and only if the two new Hoare triples are correct. Of course, the statements $S_1$ and $S_2$ are still abstract and can be further refined.

\begin{figure}[t]
\begin{definitionbox}[label={def:refinementRules}]{Refinement Rules for the Correctness-by-Construction Approach}{}
	Let $P$ be the precondition, $Q$ be the postcondition, and $S$ be an abstract statement. Then, the Hoare triple $\{P\}S\{Q\}$ is \textbf{refinable} to 
	\begin{itemize}
		\item \textbf{Skip:} $\{P\}$\textbf{skip}$\{Q\}$ \emph{iff} $P$ \emph{implies} $Q$
		\item \textbf{Assignment:} $\{P\}$ $\mathtt{x}$ := $E\{Q\}$ \emph{iff} $P$ \emph{implies} $Q$[$\mathtt{x}$ := $E$] 
		\item \textbf{Composition:} $\{P\}S_1;S_2\{Q\}$ \emph{iff} intermediate condition $M$ exists such that $\{P\}S_1\{M\}$ and $\{M\}S_2\{Q\}$ \emph{hold}
		\item \textbf{Selection:} $\{P\}$\textbf{if} $G_1\rightarrow S_1$ \textbf{elseif} $\dots$ $G_n\rightarrow S_n$ \textbf{fi}$\{Q\}$ \emph{iff}  $P$ \emph{implies} $G_1 \vee\dots\vee G_n$ and $\forall i \in \{1\dots n\} : $ $\{P\wedge G_i\}S_i\{Q\}$ \emph{holds}
		\item \textbf{Repetition:} $\{P\}$\textbf{do} $[I,V]$ $G \rightarrow S $ \textbf{od}$\{Q\}$ \emph{iff}  $P$ \emph{implies} $I$ and $I\wedge \neg G$ \emph{implies} $Q$ and $\{I\wedge G\}S\{I\}$ \emph{holds} and $\{I\wedge G \wedge V=V_0\}S\{I\wedge 0\leq V < V_0\}$ \emph{holds}
		\item \textbf{Weaken precondition:} $\{P'\}S\{Q\}$ \emph{iff} $P$ \emph{implies} $P'$
		\item \textbf{Strengthen postcondition:} $\{P\}S\{Q'\}$ \emph{iff} $Q'$ \emph{implies} $Q$
		\item \textbf{Method Call:} $\{P\}$$\mathtt{m}$($a_1,\dots,a_n$) $\rightarrow$ $b\{Q\}$ \emph{iff} method $\{P'\}$$\mathtt{m}$($p_1,\dots,p_n$) $\rightarrow$ $r\{Q'\}$ exists and $P$ \emph{implies} $P'[p_i\setminus a_i]$ and $Q'[\textrm{old}(p_i) \setminus \textrm{old}(a_i),\ r\setminus b]$ \emph{implies} $Q$
	\end{itemize}
	
	\begin{flushright}
		\cite{runge2019tool,kourie2012correctness}
	\end{flushright}
\end{definitionbox}
\end{figure}

\section{\CbCBlock --- CbC With Block Contracts}

In this section, we introduce the \CbCBlock approach that adds two new refinement rules to classic CbC. The new refinement rules increase the ways to construct programs. Therefore, the rigid CbC approach is loosened  while retaining the benefits of a structured program construction approach.
A block rule refines an abstract statement to a block that is specified with a block contract (i.e., a pre-/postcondition specification for that block)~\cite{ahrendt2016deductive}. The block is a special statement that can be further refined in two ways.
Similar to an abstract statement, any CbC refinement rules can be applied. Additionally, the block can be instantiated by any sequence of concrete statements and further blocks with a block-instantiation rule.
Thus, a block can be used to condense the application of several CbC refinement rules. For example, a block can be instantiated with a while-loop that already contains a concrete body. This instantiation replaces the application of the repetition rule and at least one assignment rule.
We introduce \CbCBlock with a motivating example and present the block rules to introduce and instantiate blocks. In the end of this section, we present \CbCBlock implemented in the CbC tool \corc\footnote{\url{https://github.com/KIT-TVA/CorC}} and evaluate its usability with a user study.

\subsection{Motivating Example}


In this section, we exemplify the \CbCBlock approach by implementing a \Q@maxElement@ algorithm. 
The \Q@maxElement@ algorithm searches the largest element in a list of integers. The list supports a \Q@get@-method which returns the element at the specified position in this list. A \Q@contains@-method checks that the result is a member of the list.
We iterate with a while-loop through the list and use local variables to temporally save the current largest element.
We use Java and JML~\cite{leavens1998jml} as programming and specification language in the example.

In Listing~\ref{listing:CbCBlockstep0}, we start implementing method \Q@maxElement@ that is specified with a pre- and postcondition contract. The precondition states that the list must contain at least one element. The postcondition states that the largest element in the list is returned.
In this example, we start with a program where some CbC refinement rules are already applied, and then, apply the block rules to finish the implementation. 

%
%
\begin{figure}[ht]
\begin{lstlisting}[caption={Initial program of \texttt{maxElement}},captionpos=b,label={listing:CbCBlockstep0}]
/*@ public normal_behavior
  @ requires list.size() > 0;
  @ ensures list.contains(\result) && 
  @   (\forall int q; q >= 0 && q < list.size(); \result >= list.get(q));
  @*/
  public int maxElement(List list) {
    int i = list.get(0);
    int j = 1;

  //@ Intm: list.size() > 0 && i == list.get(0) && j == 1;

  /*@
    @ normal_behavior
    @ requires list.size() > 0 && i == list.get(0) && j == 1;
    @ ensures list.contains(i) &&
    @   (\forall int q; q >= 0 && q < list.size(); i >= list.get(q));
    @*/
    { \Block B1; }

  //@ Intm: list.contains(i) &&
  //@   (\forall int q; q >= 0 && q < list.size(); i >= list.get(q));

    return i;
  }
\end{lstlisting}
\end{figure}

The program is already split into three parts using the composition refinement rule with two intermediate conditions between them. In the first part, two local variables \Q@i@ and \Q@j@ are introduced with the assignment rule.
The variable \Q@i@ is used to temporally store the largest element. In the beginning, the largest element (up to that point) is the first element in the list. The variable \Q@j@ is our loop variable to iterate through the list. In the third program part, variable \Q@i@ is returned.
We start with this program state to show that \CbCBlock also supports the standard CbC approach, but we will now use the block rules to exemplify the benefits of \CbCBlock.

In the second part between the two intermediate conditions, the block rule of \CbCBlock is applied. The block \Q@B1@ is specified with a block contract in lines 13--16. 
For the functional behavior, we specify the values of \Q@i@ and \Q@j@, and the size of the list in the precondition. This specification is equal to the preceding intermediate condition.
The postcondition of the block states that \Q@i@ is the largest element. This postcondition meets the intermediate condition after the block. Therefore, we know that the program is correct under the assumption that block \Q@B1@ fulfills its specification. 
In the next steps, we concretize the block, and the applied refinement rule guarantees that the instantiated block is correct according to its specification. We can concretize the block either in one step, by
instantiating the block with concrete Java code or stepwise by instantiating the
block with some Java statements and other blocks.

We decide to partially implement the block. In Listing~\ref{listing:CbCBlockstep1}, we define the block that should be refined by referring to block \Q@B1@ in line 1 and repeat the specification of that block. Inside the curly brackets the instantiation is shown. We implement the block with a while-loop. We iterate through the list as long as variable \Q@j@ is smaller than the size of the list. This is stated in the loop guard. 
The loop is specified with a loop invariant in lines 10--12. Thereby, the variable \Q@i@ stores the largest element of the already checked elements up to the index \Q@j@. The index \Q@j@ is inside the bounds of the list. We use the difference between the size of the list and \Q@j@ as loop variant. As variable \Q@j@ increases in each iteration, the difference decreases, and the loop thus terminates. The increase of \Q@j@ is already implemented at the end of the while-loop. The body of the loop contains another block \Q@B2@ in lines 15--22.
The precondition of the block is the loop invariant with the difference that we know that variable \Q@j@ is smaller than the size of the list.
This block should update variable \Q@i@ that contains the largest element. 
We want to compare the largest element with the next element in the list. If that element is larger, variable \Q@i@ is updated. We checked one more element of the list, and therefore, we increase the range of the universal quantifier in the postcondition.
This instantiation condenses the application of three CbC refinement rules, the repetition rule to create the loop, a composition rule, and an assignment rule for the loop body.

\begin{figure}[ht]
\begin{lstlisting}[caption={Refinement of block \texttt{B1}},captionpos=b,label={listing:CbCBlockstep1}]
Block B1;

/*@ 
  @ normal_behavior
  @ requires list.size() > 0;
  @ ensures list.contains(i) && 
  @   (\forall int q; q >= 0 && q < list.size(); i >= list.get(q));
  @*/
  { 
  //@ loop_invariant list.contains(i) && j > 0 && j <= list.size() &&
  //@   (\forall int q; q >= 0 && q < j; i >= list.get(q));
  //@ decreases list.size() - j;
    while (j < list.size()) {

    /*@ 
      @ normal_behavior
      @ requires list.contains(i) && j > 0 && j < list.size() &&
      @   (\forall int q; q >= 0 && q < j; i >= list.get(q));
      @ ensures list.contains(i) && j > 0 && j < list.size() && 
      @   (\forall int q; q >= 0 && q < j+1; i >= list.get(q));
      @*/
      { \Block B2; }

      j = j + 1;
    }
  }
\end{lstlisting}
\end{figure}

The next step is to verify that the instantiation satisfies the block contract.
Starting with the precondition and after executing the introduced instantiation, the postcondition of the block contract must be fulfilled. The details of checking this instantiation are explained in the next section. 
When the correctness of this instantiation is shown, we can continue to instantiate the next block \Q@B2@.

In Listing~\ref{listing:CbCBlockstep2}, the instantiation of block \Q@B2@ implements the case when a larger element is found. The functional pre- and postcondition of the block differ by the range of considered elements. In the postcondition, the range is increased by one. In this block, we compare the current largest element \Q@i@ with the element at index \Q@j@. If the element at index \Q@j@ is larger, we update variable \Q@i@. In the other case, \Q@i@ is still the largest element and not updated. Again, we condense CbC refinement rules by instantiating the block with concrete code.
We have to verify that the instantiation is correct. If this is done, we have finished the refinement process because no further block or any abstract statement is left.

\begin{figure}[ht]
\begin{lstlisting}[caption={Refinement of block \texttt{B2}},captionpos=b,label={listing:CbCBlockstep2}]
Block B2;

/*@
  @ normal_behavior
  @ requires list.contains(i) && j > 0 && j < list.size() &&
  @   (\forall int q; q >= 0 && q < j; i >= list.get(q));
  @ ensures list.contains(i) && j > 0 && j < list.size() && 
  @   (\forall int q; q >= 0 && q < j+1; i >= list.get(q));
  @*/
  { 
    if (list.get(j) > i) {
      i = list.get(j);
    }
  }
\end{lstlisting}
\end{figure}

By guaranteeing the correctness of all refinement steps, we can conclude that the whole program is correct by construction.
The resulting program is shown in Listing~\ref{listing:CbCBlockfinal}. Here, the blocks are recursively replaced with their instantiation. The specification is limited to the method contract and the loop invariant and variant annotations.
By stepwise refining the program, we can detect errors when proving single refinement steps. 
This locality of information helps to track down errors more easily than with monolithic post-hoc verification.

\begin{figure}[ht]
\begin{lstlisting}[caption={Final implementation of \texttt{maxElement}},captionpos=b,label={listing:CbCBlockfinal}]
/*@ public normal_behavior
  @ requires requires list.size() > 0;
  @ ensures list.contains(\result) &&
  @   (\forall int q; q >= 0 && q < list.size(); \result >= list.get(q));
  @*/
  public int maxElement(List list) {
    int i = list.get(0);
    int j = 1;
  //@ loop_invariant list.contains(i) && j > 0 && j <= list.size() && 
  //@   (\forall int q; q >= 0 && q < j; i >= list.get(q));
  //@ decreases list.size() - j;
    while (j < list.size()) {
      if (list.get(j) > i) {
        i = list.get(j);
      }
      j = j + 1;
    }
    return i;
  }
\end{lstlisting}
\end{figure}

\subsection{Block Refinement Rules of \CbCBlock}

In this section, we describe how refinement rules are added to establish the \CbCBlock approach. We describe the refinement rule to introduce a block and the refinement rule to instantiate a block with concrete code.

\hl{For the block rules to be syntactically applicable, we extend Java to write a block with a name. Normally, a block in Java is just a sequence of Java statements inside curly brackets. In addition, Ahrendt et al.\cite{ahrendt2016deductive} defined block contracts to specify the behavior of a Java block similar to a method~\cite{meyer1992applying,leino1995toward}.
To establish a CbC refinement process, we introduce a specified block as an abstract statement in \CbCBlock with an according refinement rule.
In the refinement rule, we use the Hoare triple notation that is also used for the classic CbC refinement rules.
We focus on functional pre-/postconditions and exclude regular and irregular termination of blocks for \CbCBlock.
For the instantiation of a block (e.g., to write a sequence of Java statements that fulfill the specification), we follow the syntax of a concrete block in Java, but we add a name for reference.}

An abstract statement is refined by the block-introduction rule to a block with a name and a block contract. Thus, a block name is an abstract placeholder. The side condition of the refinement rule guarantees the correctness of the program to be developed. For the block-introduction rule, we have to check three parts. First, the precondition of the refined abstract statement must imply the precondition of the block. This ensures that the pre-state of the block is satisfied, and the block can be executed. Second, the postcondition of the block must imply the postcondition of the refined abstract statement to continue the program after the block. Third, the block must satisfy its own contract.
As the block can be seen as a Hoare triple, any CbC refinement rule can be applied to the block.
The check of the side condition of the applied refinement rule guarantees the correctness of the block under development.

\begin{Rule}[Block-Introduction]
	Hoare triple $\{P\}S\{Q\}$ is \textbf{refinable} to $\{P'\}$ \Q@Block B@ $\{Q'\}$\\ \emph{iff} $P$ \emph{implies} $P'$ and $Q'$ \emph{implies} $Q$ and $\{P'\}$ \Q@Block B@ $\{Q'\}$ \emph{holds}.
\end{Rule}

With the block-instantiation rule, we allow to instantiate a block with concrete code that can contain further blocks (see the instantiation in Listing~\ref{listing:CbCBlockstep1}).
For application, it must be checked that this instantiation fulfills the block contract. 
We use the capabilities of program verification.
We translate the block to a method and verify whether this translated \emph{block-method} fulfills its contract.
Thus, we have to prove that the dynamic formula $P \rightarrow$ \Q@<statement;...>@$Q$ is fulfilled. Assuming the precondition, the postcondition must be satisfied after executing the statements in the block.
Dynamic logic extends first-order logic with two operators. A diamond modality \Q@<p>@$Q$ and a box modality \Q@[p]@$Q$ with a program \Q@p@ and a dynamic logic formula $Q$. Intuitively, the diamond modality
states total correctness of the program, and partial correctness is stated with the box modality.

The translation from a block to a block-method is as follows. The block contract is translated to the contract of the block-method.
The translated block-method is added to the same class as the method in which the block is declared.
The statements within the block become the body of the block-method. As block could introduce local variables that are already declared in the surrounding method~\cite{kohlbecker1986hygienic}, an $\alpha$-conversion~\cite{barendregt1984lambda} is necessary to safely rename identifiers.
A block does not have the same scope of a complete method and neither has parameters nor a return type. Declarations of parameters and local variables have to be added to the block-method, so that is has the same scope as the method.
Therefore, we translate accessible variables of the block to parameters of the block-method, and assignable variables of the block to fields of the class containing the block-method. 
This differentiation is done because a contract can only access parameter values before execution of the method, but it can access the modified values of fields.
Accessible or assignable fields of the class are usable because the block-method is added to the class for verification purposes.
The return type of the block-method is \Q@void@ because we exclude the use of return statements inside the block.
This transformation is limited in its expressiveness as we are excluding irregular termination, but sufficient to demonstrate the correctness-by-construction process for normal execution.

\begin{Rule}[Block-Instantiation]
	Hoare triple $\{P\}$ \Q@Block B@ $\{Q\}$ is \textbf{refinable} to\\ $\{P\}$ \Q@<statement;...>@ $\{Q\}$ \emph{iff} $P \rightarrow$ \Q@<statement;...>@ $Q$, where \Q@<statement;...>@ is any sequence of concrete program statements possibly containing further blocks.
\end{Rule}

\subsection{Discussion}

In this subsection, we discuss the block refinement rules in comparison to related approaches that allow to introduce code sequences, such as method calls, macro expansions, and abstract execution.

\paragraph{Difference to the Method Call Rule.}
The difference between the block refinement rules and the method call refinement rule is that for a method call only the contract is used to verify correctness of the caller. The content of the method is assumed to be correct with respect to the method's contract. 
With the block rules, both the contract and the content of the block are always checked for correctness. A big difference between the block rules and the method call rule is their scope. In a method, only variables of the method are changed and no local variables of the calling context. A block allows the modification of local variables as demonstrated in the motivating example.

\paragraph{Difference to Macro Expansion.}
Macro expansion is a textual transformation of input source code. A preprocessor replaces macros with concrete source code. This is similar to the block-instantiation rule, where a block name is replaced with concrete source code.
As for our block-instantiation rule, a macro expansion can capture identifiers already used in the surrounding scope. Therefore, hygienic macro expansion uses $\alpha$-conversion~\cite{barendregt1984lambda} to rename identifiers. 
The difference to \CbCBlock is that \CbCBlock demands a specification for a block that is introduced in the block-introduction rule. Additionally, the block-instantiation rule starts a procedure that verifies whether the block instantiation fulfills its specification. A macro expansion is just a transformation of code.

\paragraph{Abstract Execution for Correctness-by-Construction.}

Abstract execution~\cite{steinhofel2019abstract} is a technique to specify and verify programs with partially abstract parts. Abstract execution generalizes symbolic execution. It is tailored to Java, but the principles are applicable to other sequential languages. Java and JML are extended with the concept of abstract program element (APE); an abstract statement or an abstract expression. An APE is a placeholder for any program part with or without side effects. To verify the correctness of programs containing abstract program elements, these elements are specified with a contract similar to a block contract.
The extended specification language of abstract execution allows to specify the behavior of the program element in cases of regular or irregular termination including side effects~\cite{steinhofel2019abstract}.
The strength of abstract execution is the reasoning of irregular termination that we exclude in \CbCBlock.

APEs can be used similar to blocks of \CbCBlock to establish a process for refinement-based program construction. With abstract execution, we write programs containing APEs. These programs  can be verified to be correct under the assumption that the APEs fulfill their specifications. In a refinement step, an APE is replaced by a program part that contains concrete statements and possibly other abstract program elements. We have to verify that the insertion fulfills the specification of the refined APE. This refinement is repeated until no APE remains. 
Similar to classic CbC, this process does not require a program to be monolithically verified, but it is sufficient that each APE replacement is verified to conclude that the program is correct by construction. 
This process is the same as for \CbCBlock if we always instantiate a block without using any other CbC refinement rule. We still argue that the application of other CbC refinement in tandem with blocks is beneficial because they enforce a structured program construction process where developers think about the implementation more thoroughly.
Therefore, we decided for \CbCBlock as presented instead of utilizing abstract execution because \CbCBlock is the sweet spot between expressiveness and changes to the program construction process of classic CbC. Combining classic CbC refinement rules with abstract execution requires major changes to classic CbC, so that the strength of abstract execution is usable (i.e., the refinement rules must be adapted to consider irregular termination).

\subsection{Implementation}

In this subsection, we describe the implemented tool support for \CbCBlock. 
Classic CbC is already supported by the \corc tool~\cite{runge2019tool}.
\corc has a graphical and a textual editor to develop programs.
In this work, we extend \corc with the new rules of \CbCBlock. 
\hl{The textual IDE is implemented in Eclipse with Xtext.\footnote{\url{https://www.eclipse.org/Xtext/}} Xtext provides the functionality to develop IDEs for domain-specific languages. We use Xtext to establish an editor for \corc-programs that consist of JML, Java, 
and the CbC-specific keywords. The grammar of a \corc-program, which represents a refinement-based construction according to CbC, is defined in Xtext.
Based on this grammar, Xtext supports syntax checks, highlighting and auto completion.}

\hl{In \corc, we implemented the refinement rules of Definition~\ref{def:refinementRules}. 
For \CbCBlock, we added refinement rules of block-introduction and  block-instantiation.
An instantiation is written in a separated program starting with the name of the refined block and followed by the contract and the block's implementation.}
To verify that a block-instantiation fulfills its block contract, a generator is implemented. It transforms a block-instantiation to a method and starts the verification process by calling KeY~\cite{ahrendt2016deductive}. The transformation to a method follows the concept presented for the block-instantiation rule before.
The generator also creates the final method implementation if all refinement steps are proven. All block instantiations must be recursively inserted to get the final method implementation. The final method implementation can be integrated into an existing code base. This generator can also construct partial methods when some parts are not fully refined. This is helpful in intermediate steps of the construction to retain an overview of the current method.


\subsection{Evaluation with a User Study}

We evaluate \CbCBlock with a user study. We compare \CbCBlock with classic CbC by (dis)allowing the use of the block rules to answer the following research question. 
\begin{description}
	\item[RQ1] Does the \CbCBlock approach improve classic CbC in terms of usability?
\end{description}

\hl{In the user study, we engaged five participants that know classic CbC and the \corc tool. Knowledge of classic CbC and \corc is a necessary prerequisite because a new feature was evaluated that could only be understood if the participants already knew the classic CbC approach in \corc. Then, they can estimate the benefits of the new block refinement rules.
With this small number of participants, it was possible that everyone could solve the study tasks consecutively.}

Each participant had to implement two algorithms, 
one algorithm with \CbCBlock and the block rules and one algorithm with classic CbC and without the block rules. 
\hl{The algorithms are \Q@maxElement@ and \Q@dutchFlag@. The \Q@maxElement@ algorithm was already introduced as the motivating example. The \Q@dutchFlag@ algorithm sorts a list containing only three different elements. In the original description, each element has either a red, a blue or a white color. The elements of the list are to be reordered so that the list results in the national flag of the Netherlands (red, white, blue). We adapted the task to a list that contains an unknown quantity of the numbers 0, 1, and 2.
Both algorithms can be implemented in a few lines of code with one loop through the list of elements. As both algorithms are explained to the participants, we expected that the correct implementation is possible without major problems.}
For each task, they had 30 minutes.
We split the participants in two groups using the Latin Square design~\cite{wohlin2012experimentation}. Each group implemented the algorithms in the same order, but the approaches were used crosswise to address possible learning effects through an order of the tools. After implementing both algorithms, we conducted a structured interview. 
The questions of the interview are presented in Appendix~\ref{appendix:questions}.

We summarize the most important answers of the participants and discuss the findings. 
The tasks were correctly solved by all participants. In general, the participants needed more time for the task with \CbCBlock than for the task with classic CbC. As we only had five participants, these statistics are only of limited significance.
The participants followed the \CbCBlock approach and refined a program stepwise using the block rule. All participants considered the introduction of the block rules to be useful. The rules save the application of other CbC refinement rules during construction.
For \CbCBlock, the participants positively mentioned the familiarity with a textual editor, the grouping of statements for one refinement step by using a block, and the freedom to not be bound to the classic CbC refinement rules.
For classic CbC, the answers are in line with previous user studies~\cite{runge2019user,runge2021user}. The participants positively mentioned the visual overview of the refinements and the status of the verification. They liked the fine-grained feedback for every applied refinement rule. As \CbCBlock extends CbC, these positive answers also apply for \CbCBlock. The participants stated for both approaches that the incremental construction helps to track down errors. The participants still miss more assistance if a proof cannot be closed.

While we observed the participants, we noticed that
both approaches need a correct and sufficient specification as a starting point. If that is the case, refining and checking side-conditions can be very successful. If, on the contrary, the specification needs to be adjusted in the process, the effort to verify the program increases drastically.
With classic CbC, the participants are forced into the process of refining and verifying top-down. With \CbCBlock, the participants have more freedom to develop the program. 

Regarding the finished tasks, all participants had experience with \corc. Therefore, it is not surprising that all participants finished the task. They never used \CbCBlock before, but the participants conceptually understood the features of \CbCBlock and accepted the expansion well.  
Nonetheless, the participants need more time to fully understand the IDE and the programming workflow of \CbCBlock. This results in a longer time to implement the algorithms.

We can answer \textbf{RQ1} that \CbCBlock is a promising feature to increase the usability of \corc, but as for each new feature, developers need time to get used to. Some answers of the participants highlight that with more training and better tool support, they are willing to use the \CbCBlock approach to construct correctness-critical programs.

\paragraph{Threats to Validity}
\hl{In our user study, we had only five participants. Due to the small number of participants, the qualitative results that we collected in the structured interview are not generalizable. Nevertheless, the results are relevant, since the users are experts in \corc and can therefore assess the advantages of the extension. A more comprehensive evaluation with non-experts is not possible because they cannot properly interact with the tool.
The participants only implemented and verified two small algorithms in our experiment, and therefore, we cannot generalize the results to larger problems.}

\section{\TraitCbC}

In this section, we introduce \TraitCbC with a motivational example. We present\linebreak \TraitCbC formally and prove soundness of the TratiCbC program construction approach. In the end of this section, we show the proof-of-concept implementation and a feasibility evaluation.

\TraitCbC uses method abstraction and method composition to enable an incremental CbC-based development approach. This approach on method level allows a flexible way to construct the desired program with any number and size of auxiliary methods.
A developer starts by implementing a method (e.g., a method \texttt{a}) in a first trait. Similar to classic CbC, the method can contain holes that are refined in subsequent steps. A hole in \TraitCbC is an abstract method (e.g., an abstract method \texttt{b}) that is called in method \texttt{a}; that is, a call to an abstract method corresponds to an abstract statement in classic CbC.
In the next step, one of these new abstract methods (e.g., \texttt{b}) is implemented in a second trait, again more abstract methods can be declared for the implementation. To be correct, it must be proven that each implemented method satisfies its specifications. Afterwards, the traits are composed; the composition operation checks that the specification of the concrete method \texttt{b} in the second trait fulfills the specification of the abstract method \texttt{b} in the first trait. This incremental program construction approach stops when the last abstract method is implemented, and all traits are composed.

\subsection{Motivating Example}
\label{sec:example}


We illustrate using an example of how \TraitCbC enables CbC using traits.
We use an object-oriented language in the code examples.
In Listing~\ref{listing:CbC-Traitstep0}, we construct a method  \Q@maxElement@ that finds the maximum element in a list of numbers.
We slightly adjust the implementation of the algorithm to better fit for \TraitCbC.
With \TraitCbC, we have an abstraction on method level. We utilize methods to outsource program pieces that can be reused (i.e., we want to implement methods that are verified once, but called several times in a program to reduce verification effort).

In this \Q@maxElement@ example, a list has a head and a tail.
Only non-empty lists have a maximum element. This is explicit in the precondition of our specification, where we require that the list has at least one element. In the postcondition, we specify that the result is in the list and larger than or equal to every other element.
In the first step, we create a trait \Q@MaxETrait1@ that defines the abstract method \Q@maxElement@. 
The method \Q@maxElement@ is abstract, i.e., equivalent to an abstract statement in CbC.

\begin{figure}[ht]
\begin{lstlisting}[caption={Initial trait for maxElement},captionpos=b,label={listing:CbC-Traitstep0}]
trait MaxETrait1 {
  @Pre: list.size() > 0
  @Post: list.contains(result) &
    (forall Num n: list.contains(n) ==> result >= n)
  abstract Num maxElement(List list);
}
\end{lstlisting}
\end{figure}

In the second step in trait \Q@MaxETrait2@ in Listing~\ref{listing:CbC-Traitstep1}, we implement the method \Q@maxElement@ using two abstract methods. We introduce an \Q@if-elseif-else@-expression where the branches invoke abstract methods. The guards check whether the list has only one element or whether the current element is larger than or equal to the maximum of the rest of the list. The abstract method \Q@accessHead@ returns the current element, and the abstract method \Q@maxTail@ returns the maximum in the remaining list. So, we recursively search the list for the largest element by comparing the maximum element of the list tail with the current element until we reach the end of the list.

\begin{figure}[ht]
\begin{lstlisting}[caption={Implementation of maxElement with auxiliary methods},captionpos=b,label={listing:CbC-Traitstep1}]
trait MaxETrait2 {
  @Pre: list.size() > 0
  @Post: list.contains(result) &
    (forall Num n: list.contains(n) ==> result >= n)
  Num maxElement(List list) =
    if (list.size() == 1) {accessHead(list)}
    elseif (accessHead(list) >= maxTail(list)) 
      {accessHead(list)}
    else {maxTail(list)}

  @Pre: list.size() > 0
  @Post: result == list.element()
  abstract Num accessHead(List list);

  @Pre: list.size() > 1
  @Post: list.tail().contains(result) &
    (forall Num n: list.tail().contains(n) ==> result >= n)
  abstract Num maxTail(List list);
}
\end{lstlisting}
\end{figure}



The correct implementation of the method \Q@maxElement@ can be guaranteed under the assumptions that all introduced abstract methods are correctly implemented.
Similar to post-hoc verification, a program verifier conducts a proof of method \Q@maxElement@ and uses the introduced specifications of the methods \Q@accessHead@ and \Q@maxTail@. If the proof succeeds, we know that the first method is correctly implemented. In our incremental CbCTrait approach, we verify each method implementation directly after construction; and so we are able to reuse each implemented method in the following steps (e.g., by calling the method in the body of other methods).


We now compose the developed traits to complete the first construction step. To perform the composition \Q@MaxETrait1@ + \Q@MaxETrait2@, we check that the specification of the method \Q@maxElement@ fulfills the specification of the abstract method in the first trait (cf. Liskov substitution principle~\cite{liskov}). In this case, this means checking that:\\
\Q@MaxETrait1.maxElement(..).pre ==> MaxETrait2.maxElement(..).pre@
as well as:\\ 
\Q@MaxETrait2.maxElement(..).post ==> MaxETrait1.maxElement(..).post@.\\
When the composition of two verified traits is successful, the result is also a verified trait.
Note that the composed trait does not need to be verified directly by a program verifier in \TraitCbC because it is correct by construction.
In this example, the specifications are the same, thus checking for a successful composition is trivial, but this is not generally the case. 
In particular, the logic needs to take into account ill-founded specifications and recursion in the specification.
We discuss more about the difficulties of handling those cases in previous work~\cite{runge2022traits}.

The methods \Q@accessHead@ and \Q@maxTail@ are implemented in the next two construction steps in traits \Q@MaxETrait3@ and \Q@MaxETrait4@\footnote{The methods could also be implemented in one trait.}.
The implementations are shown in Listing~\ref{listing:CbC-Traitstep2} and in Listing~\ref{listing:CbC-Traitstep3}.
As we implement a recursive method, the method \Q@maxTail@ calls the \Q@maxElement@ method, thus \Q@maxElement@ is introduced as an abstract method in this trait. 
We have to verify that the method \Q@accessHead@ satisfies its specification using a program verifier.
Similarly, we have to verify the correctness of the method \Q@maxTail@.

\begin{figure}[ht]
\begin{lstlisting}[caption={Implementation of accessHead},captionpos=b,label={listing:CbC-Traitstep2}]
trait MaxETrait3 {
  @Pre: list.size() > 0
  @Post: result == list.element()
  Num accessHead(List list) = list.element()
}
\end{lstlisting}
\end{figure}

\begin{figure}[ht]
\begin{lstlisting}[caption={Implementation of maxTail},captionpos=b,label={listing:CbC-Traitstep3}]
trait MaxETrait4 {
  @Pre: list.size() > 1
  @Post: list.tail().contains(result) &
    (forall Num n: list.tail().contains(n) ==> result >= n)
  Num maxTail(List list) = maxElement(list.tail())

  @Pre: list.size() > 0
  @Post: list.contains(result) &
    (forall Num n: list.contains(n) ==> result >= n)
  abstract Num maxElement(List list);
}
\end{lstlisting}
\end{figure}

As before, all traits are composed, and it is checked that the specifications of the concrete methods fulfill the specifications of the abstract ones.
As we have no contradicting specifications for the same methods, the composition is well-formed. 
In Listing~\ref{listing:CbC-Traitstep4}, the final program \Q@MaxE@ is shown. All traits are composed. 

\begin{figure}[ht]
\begin{lstlisting}[caption={Trait composition},captionpos=b,label={listing:CbC-Traitstep4}]
class MaxE = MaxETrait1 + MaxETrait2 + MaxETrait3 + MaxETrait4 
\end{lstlisting}
\end{figure}

%
%


The already proven auxiliary methods in traits can be reused. 
For example, if we want to implement a \Q@minElement@ method as shown in Listing~\ref{listing:CbC-Traitmin},
we could reuse already implemented traits to reduce the programming and verification effort.
The method \Q@minElement@ is implemented in the following in trait \Q@MinE@ with one abstract method. The specification of the method \Q@accessHead@ is the same as for the method \Q@accessHead@ above, so \Q@MaxETrait3@ can be reused.
In this example, we show the flexible granularity of \TraitCbC by directly implementing the else branch, instead of introducing an auxiliary method as for \Q@maxElement@.

\begin{figure}[ht]
\begin{lstlisting}[caption={Implementation of minElement with auxiliary method accessHead},captionpos=b,label={listing:CbC-Traitmin}]
trait MinE {
  @Pre: list.size() > 0
  @Post: list.contains(result) &
    (forall Num n: list.contains(n) ==> result <= n)
  Num minElement(List list) =
    if (list.size() == 1) {accessHead(list)}
    elseif (accessHead(list) <= minElement(list.tail()))
      {accessHead(list)}
    else {minElement(list.tail())}
  
  @Pre: list.size() > 0
  @Post: result == list.element()
  abstract Num accessHead(List list);
}
\end{lstlisting}
\end{figure}

The correctness of \Q@minElement@ is verified with the specifications of the method \Q@accessHead@. 
By composing \Q@MinE@ with \Q@MaxETrait3@, we get a correct implementation of \Q@minElement@.
Note how this verification process supports abstraction:
as long as the contracts are compatible,
methods can be implemented in different styles by different developers to best meet non-functional requirements while preserving the specified observable behavior~\cite{terBeek2018xbc}.
A completely different implementation of \Q@maxElement@ can be used if it fulfills the specification of the abstract method \Q@maxElement@ in trait \Q@MaxETrait1@. This decoupling of specification and corresponding satisfying implementations facilitates an incremental program construction approach where a specified code base is extended with suitable implementations~\cite{damiani2014verifying}.

\subsection{Object-Oriented Trait-Based Language}
In this section, we formally introduce the syntax, type system, reduction, and flattening semantics of a minimal core calculus for \TraitCbC.
We keep this calculus for \TraitCbC parametric in the specification logic so that it can be used with a suitable program verifier and associated logic.
The presented rules to compose traits are conventional. The focus of our work is to enable a CbC approach using traits that developers can easily adopt. Therefore, we present the calculus to prove soundness of \TraitCbC, but focus on the presentation of the advantages of incremental trait-based programming in this paper. Indeed, languages with traits and with a suitable specification language intrinsically enable incremental program construction.

\subsubsection{Syntax}
The concrete syntax of our core calculus for \TraitCbC is shown in Fig.~\ref{f:syntax}, where non-terminals ending with `s' are implicitly defined as a sequence of non-terminals, i.e., $vs ::= v_1 \dots v_n$. We use the metavariables $t$ for trait names, $C$ for class names and $m$ for method names.
A program consists of trait and class definitions. Each definition has a name and a trait expression $\mathit{E}$. The trait expression can be a $\mathit{Body}$, a trait name, a composition of two trait expressions $\mathit{E}$, or a trait expression $\mathit{E}$ where a method is made abstract, written as $E[\mathtt{makeAbstract}\ m]$. A $\mathit{Body}$ has a flag $\mathtt{interface}$ to define an interface, a set of implemented interfaces $\Cs$ and a list of methods $\Ms$. Methods have a method header $\mathit{MH}$ consisting of a specification $S$,  the return type, a method name, and a list of parameters. Methods have an optional method body. In the method body, we have standard expressions, such as variable references, method calls, and object initializations. For simplicity, we exclude updatable state. Field declarations are emulated by method declarations, and field accesses are emulated by method calls.

The specification $\mathit{S}$ in each method header is used to verify that methods are correctly implemented. The specification is written in some logic. In our examples, we will use first-order logic (cf. the example in Section~\ref{sec:example}).
A well-formed program respects the following conditions:

Every $\mathit{Name}$ in $\mathit{Ds}$ must be unique so that $\mathit{Ds}$ can be seen as a map
from names to trait expressions.
Trait expressions $E$ can refer to trait names $t$.
A well-formed $\Ds$ does not have any circular trait definitions like $t = t$
or $t_1 =t_2$ and $t_2 = t_1$.
In a $\Body$, all names of implemented interfaces must be unique and all method names must be unique, so that $\Body$ is a map
from method names to method definitions.
In a method header, parameters must have unique names, and no explicit parameter
can be called $\mathtt{this}$.

%
%
%
%
 
\begin{figure}[t]
	\centering
	\setlength{\fboxsep}{3pt}
	\setlength{\fboxrule}{1pt}
	\fbox{
		\begin{minipage}{0.97\linewidth}
			\begin{array}[t]{lcl}
				\mathit{Prog} &::= &\Ds\ e\\
				D &::= &\TD\ |\ \CD\\
				\Name &::= &t\ |\ C\\
				\TD &::=  &t = E\\
				\CD &::= &C = E\\
				E &::= &\Body\ |\ t\ |\ E + E\ |\ E[\mathtt{makeAbstract}\ m]\\
				\Body &::= &\{\mathtt{interface}?\ [\Cs]\ \Ms \}\\
				M &::= &\MH\ e?;\\
				\MH &::= &S\ \methodKW\ C\ m(C_1\ x_1 \dots C_n\ x_n)\\
				e &::= &x\ |\ e.m(\es)\ |\ \newKW\ C(\es)\\
				\E_v &::= &[].m(es)  \ |\ v.m(vs\ []\ es)\ |\ \newKW\ C(\vs\ []\ \es)\\
				v &::= &\newKW\ C(\vs)\\
				\Gamma &::= &x_1:C_1 \dots x_n:C_n\\
				S &::= & \dots e.g.\ \mathtt{Pre}:P\ \mathtt{Post}:P\\
				P &::= & \dots e.g.\ \text{First order logic}
			\end{array}
	\end{minipage}}
	\caption{Syntax of the trait system}
	\label{f:syntax}
\end{figure}


\subsubsection{Typing Rules}

In our type system, we have a typing context $\Gamma ::= x_1:C_1\dots x_n:C_n$ which assigns types $C_i$ to variables $x_i$.
We define typing rules for our three kinds of expressions: $x$, method calls, and object initialization. We combine typing and verification in our type checking $	\Gamma \vdash e : C \dashv P_0 \models P_1$. This judgment can be read as: under typing context $\Gamma$, the expression $e$ has type $C$, where under the knowledge $P_0$ we need to prove $P_1$. The knowledge $P_0$ is our collected information that we use to prove a method correct.
That means, in our typing rules, we collect the knowledge about the parameters and expressions in a method body to verify that this method body fulfills the specification defined in the method header. The verification obligation $P_1$ should follow from the knowledge $P_0$.

We check if methods are well-typed with judgments of form
$\Ds; \Name \vdash M : \OK$. This judgment can be read as: in the definition table, the method $M$ defined under the definition \Name\ is correct.
The typing rules of Fig.~\ref{f:typing} are explained in the following. 
The first four rules type different expressions and collect the information of these expressions to prove with rule \textsc{MOK} that a method fulfills its specification. In the rule \textsc{MOK} with keyword \textbf{verify}, we call a verifier to prove each method once. Abstract methods (\textsc{AbsOK}) are always correct. Rule \textsc{BodyTyped} ensures that all methods in a body are correctly typed.

\begin{figure}[t]
	\centering
	\begin{scriptsize}
		\begin{mathpar}
			\inferrule
			{
			}{
				\Gamma \vdash x: \Gamma(x) \dashv \result: \Gamma(x)\ \&\ \result = x \models \mathit{true}}
			\quad (\textsc{x})
			
			\inferrule
			{
				S\ \methodKW\ C\ m (C_1\ x_1 \dots C_n\ x_n) \_;\ \in\ \mathit{methods(C_0)}\\
				\Gamma \vdash e_0 : C_0 \dashv P_0 \models P'_0\
				\dots\
				\Gamma \vdash e_n : C_n \dashv P_n \models P'_n\\
				x'_0 \dots x'_n\ \mathit{fresh}\\
				S' = S[\this:=x'_0,\ x_1 := x'_1,\ \dots,\ x_n:= x'_n]\\
				P = (\result:C\ \&\ P_0[\result:=x'_0]\ \&\ \dots\ \&\ P_n[\result:=x'_n] \&\ (\Pre(S') \implies \Post(S')))
			}{
				\Gamma \vdash e_0.m(e_1\dots e_n) : C \dashv P \models P'_0\ \&\ \dots\ \&\ P'_n\ \&\ \Pre(S')}
			(\textsc{Method})
			
			\inferrule
			{
				\Gamma \vdash e_1 : C_1 \dashv P_1 \models P'_1\
				\dots\
				\Gamma \vdash e_n : C_n \dashv P_n \models P'_n\\
				\mathit{getters}(C) = S_1\ \methodKW\ C_1\ x_1();\ \dots\ S_n\ \methodKW\ C_n\ x_n();\\
				x'_1 \dots x'_n\ \mathit{fresh}\\
				S'_i = S_i[\this:=\result]\\
				P''_i = (P_i[\result:=x'_i]\ \&\ (\Pre(S'_i) \implies \result.x_i()=x'_i))\\
				P = (\result:C\ \&\ P''_1\ \&\ \dots\ \&\ P''_n)
			}{
				\Gamma \vdash \newKW\ C(e_1 \dots e_n) : C \dashv P \models P'_1\ \&\ \dots\ \&\ P'_n\ \&\ \Pre(S'_1)\ \&\ \dots \&\ \Pre(S'_n)}
			\quad (\textsc{New})
			
			\inferrule
			{
				\Gamma \vdash e : C' \dashv P \models P'\\
				C'\ instanceof\ C
			}{
				\Gamma \vdash e: C \dashv P \models P'}
			\quad (\textsc{sub})
			
			\inferrule
			{
				\Gamma = \this:\Name,\ x_1:C_1,\ \dots,\ x_n:C_n\\
				\Gamma \vdash e : C \dashv P \models P'\\\\
				\textbf{verify}\ Ds \vdash (\Gamma\ \&\ \Pre(S)\ \&\ P) \models (P'\ \&\ \Post(S))
			}{
				\Ds;\ \Name \vdash S\ \methodKW\ C\ m (C_1\ x_1 \dots C_n\ x_n)\ e;\ : \OK}
			\quad (\textsc{MOK})
			
			\inferrule
			{
			}{
				\Ds;\ \Name \vdash S\ \methodKW\ C\ m(C_1\ x_1 \dots C_n\ x_n);\ : \OK}
			\quad (\textsc{AbsMOK})
			
			\inferrule
			{
				\Body = \{\interface?\ [\Cs]\ M_1 \dots M_n\}\\\\
				\Ds; \Name \vdash M_1 : \OK \dots \Ds; \Name \vdash M_n : \OK
			}{
				\Ds; \Name \vdash \Body\ : \OK}
			\quad (\textsc{BodyTyped})	
		\end{mathpar}
	\end{scriptsize}
	\caption{Expression typing rules of \TraitCbC}
	\label{f:typing}
\end{figure}

\begin{description}
	\item[\textsc{x}] As usual, the type of a variable is stored in the environment $\Gamma$.
	From the verification perspective, we do not need to prove anything to be allowed to use a variable; thus we use $\mathit{true}$.
	We know that the result of evaluating a variable is the value of such variable, and that such value is of the type of the variable; thus we have $\mathit{\result:\Gamma(x)\ \&\ \result = x}$. The $\result$ is the returned value of evaluating this expression, and $\mathit{variable:type}$  is a predicate in our system.
	As you can notice, we are assuming that our parametric logic supports at least a logical \textit{and} ($\&$); but of course other ways to merge knowledge could work too.
	
	\item[\textsc{Method}] As usual, to type a method call, we inductively type the receiver and
	all the parameters. In this way, we obtain all the types $C_0 \dots C_n$, all the knowledge
	$P_0 \dots P_n$, and all the verification obligations $P'_0 \dots P'_n$.
	Inside of all conditions  $P_i \models P'_i$ we call the result of $e_i$ $\result$.
	We cannot simply merge the knowledge of $P_0 \dots P_n$, since their
	$\result$ refers to different concepts. Thus, we chose fresh $x'_0 \dots x'_n$ variables, and we rename $\result$ of $P_i$
	and $P'_i$ into $x'_i$.
	Similarly, $S'$ is the specification of the method adapted using $x'_0 \dots x'_n$.
	
	The verification obligation of course contains
	all the obligations of the receiver and the parameters, but also
	requires the precondition of the method to hold.
	
	The knowledge contains the knowledge of the receiver and the
	parameters, and the method specification in implication form.
	Naively, one could expect that since the precondition is already in
	the obligation we could simply add the postcondition to the
	knowledge.
	This would be unsound. By using the specification in implication form, the system prevents circular reasoning: we could otherwise
	use the postcondition to prove the precondition.
	Instead, when the system shows that the precondition of $S'$ holds,
	it can assume the postcondition of $S'$.
	Similar to logical \textit{and} above, we are assuming that our parametric logic supports at
	least logical \textit{implication}, but of course other forms of logical consequence could work too.
	
	Note that the postcondition will contain information about the
	result of the method body as information on the $\result$ variable.
	
	\item[\textsc{New}] As usual, to type an object instantiation, we inductively type all the
	parameters. In this way we obtain all the types $C_1 \dots C_n$, all the knowledge
	$P_1 \dots P_n$, and all the verification obligations $P'_1 \dots P'_n$.
	As we did for \textsc{Method} we use fresh variables to be able to compose predicates.
	
	As we mentioned above, we rely on abstract state operations to represent state:
	that is, all the abstract methods in $C$ need to be of form $S_i\ \mathtt{method}\ C_i\ x_i();$
	where $\this.x_i()$ returns the value of field $x_i$, that in turn was
	initialized with the result of expression $e_i$. The function $\mathit{getters}(C)$ returns all methods of this form.
	
	Knowledge $P''_i$ contains the knowledge of $P_i$ (from expression $e_i$) and it links
	such knowledge to the result of calling method $\result.x_i()$, so that
	calling a getter on the created object will return the expected value.
	However, the information is conditional over verifying the precondition
	of such getter.
	Note that we do not need to add the knowledge of the postcondition of
	$x_i()$ here; this will be handled by the \textsc{Method} rule when $x_i()$ is called.
	
	Knowledge $P$ is simply merging the accumulated knowledge; while the final obligation
	in addition to merging the accumulated obligations also requires that the
	precondition of all the getters hold. In this way the getter
	preconditions behave like the precondition of the constructor.
	By requiring those preconditions, we ensure that we can call the
	getters on all the created objects.
	\item[\textsc{Sub}] The subsumption rule is standard. We allow subtyping between class names. Note that we do not apply weakening and strengthening of conditions here.
\end{description}

Besides of typing correct programs, the typing rules of Trait-CbC have the goal to verify the correctness of method implementations.
The following rules check whether a method or a $\mathit{Body}$ are correct. The check for a correct method declaration in \textsc{MOK} calls a program verifier to verify the correctness. We need just one verifier call for the verification of each method because the rules above collected all needed knowledge and obligations.

\begin{description}
	\item[\textsc{MOK}] 
	In \textsc{MOK}, we construct a $\Gamma$, and we type the method body, obtaining
	knowledge $P$ and obligation $P'$.
	The program verifier will know the type information of $\Gamma$, the premise of the method,
	and the knowledge $P$, and will prove the obligation $P'$ and the postcondition of the method. This verification in the typing rule is indicated by the keyword \textbf{verify}. 
	Here, we use implication, but a different program verifier may use a
	different form of logical consequence.
	The program verifier can access the specification of all the other methods since
	we also provide the declaration table.
	\item[\textsc{AbsMOK}] Abstract methods are correctly typed.
	\item[\textsc{BodyTyped}] A $\Body$ is correctly typed, if all the methods in the declaration of the $\Body$ are correctly typed.
\end{description}

%
%

\subsubsection{Reduction Rules}
\label{sec:reduction}

We formulate three reduction rules for our system to evaluate input expressions to final values. We introduce an evaluation context $\E_v$ in our syntax in Fig.~\ref{f:syntax} to define the order of evaluation. The rules of Fig.~\ref{f:reduction} are explained in the following.

\begin{figure}[t]
	\centering
	\begin{scriptsize}
		\begin{mathpar}
			\inferrule
			{
				\Ds \vdash e \rightarrow e'
			}{
				\Ds \vdash \E_v[e] \rightarrow \E_v[e']}
			\quad (\textsc{$Ctx$})
			
			\inferrule
			{
				S\ \methodKW\ C\ m(C_1\ x_1,\ \dots,\ C_n\ x_n)\ e;\ \in\ \mathit{methods(C)}
			}{
				\Ds \vdash \newKW\ C(\vs).m(v_1\dots v_n) \rightarrow e[\this=\newKW\ C(\vs),\ x_1=v_1,\ \dots,\ x_n=v_n]}
			\quad
			(\textsc{mcall})
			
			\inferrule
			{
				\mathit{abs}(\Ds(C))=S_1\ \methodKW\ C_1\ x_1();\ \dots\ S_n\ \methodKW\ C_n\ x_n();
			}{
				\Ds \vdash \newKW\ C(v_1 \dots v_n).x_i() \rightarrow  v_i}
			\quad (\textsc{getter})
		\end{mathpar}
	\end{scriptsize}
	\caption{Reduction rules of \TraitCbC}
	\label{f:reduction}
\end{figure}

\begin{description}
	\item[\textsc{Ctx}] This is the conventional contextual rule, allowing the execution of subexpressions.
	\item[\textsc{Mcall}] We reduce a method call to an expression $e$, where the receiver is replaced with \newKW\ $C(\vs)$, and each parameter $x_i$ with the actual value $v_i$.
	We also ensure that the method is declared in the class $C$.
	\item[\textsc{Getter}] In our formalism, abstract methods without arguments represents getters. Notation $\mathit{abs(Body)}$ returns the set of all abstract methods in $\Body$. A valid class can only have
	abstract methods without arguments, and they will all represent getters.
\end{description}

\subsubsection{Flattening Semantics}
When we implement methods in several traits, we have to check that these traits are compatible when they are composed. This process to derive a complete class from a set of traits is called flattening. We follow the traditional flattening semantics~\cite{ducasse2006traits}.
A class that is defined by composing several traits is obtained by flattening rules. All methods are direct members of the class~\cite{ducasse2006traits}. Overall, our flattening process works as a big step reduction arrow, where we reduce a trait expression into a well-typed and verified body.




To introduce our flattening rules in Fig~\ref{f:flattening}, we first define the helper functions.
The function $\mathit{allMeth}$ collects all method headers with the same name as $m$ in all input bodies (Definition~\ref{def:allmeth}).
When two $\mathit{Body}$s are composed (Definition~\ref{def:bodyplus}), the implemented interfaces are united and the methods are composed. 
The composition of methods (Definition~\ref{def:msplus}) collects methods that are only defined in one of the input sets. If a method is in both sets, it is composed (Definition~\ref{def:mplus}). Here, we distinguish four cases. If one method is abstract and the other is concrete, we have to show that the precondition of the abstract method implies the precondition of the concrete method. Additionally, the postcondition of the concrete one has to imply the postcondition of the abstract one. This is similar to Liskov's substitution principle~\cite{liskov}.
The second case is the symmetric variant of the first case. In the third and fourth case, two abstract methods are composed. Here, the specification of one abstract method has to imply the specification of the other abstract method such that an implementation can still satisfy all specifications of abstract methods. If both methods are concrete, the composition is correctly left undefined. This composition error can be resolved by making one method $m$ abstract in the $\Body$, as defined in Definition~\ref{def:bodyabs}. The resulting $\Body$ is similar with the difference that the implementation of the method $m$ is omitted.
The flattening rules in Fig.~\ref{f:flattening} are explained in the following in detail. In these rules, a set of traits is flattened to a declaration containing all methods. If abstract and concrete methods with the same name are composed, Definitions \ref{def:bodyplus}-\ref{def:mplus} are used to guarantee correctness of the composition.

\begin{definition}[All Methods]\label{def:allmeth}
	$\mathit{allMeth}(m,\ \mathit{Bodys}) =$\\
	${}_{}\quad \{\MH;\ |\ \Body\ \in\ \mathit{Bodys},\ \Body(m)=\MH;\}$
\end{definition}
	\begin{definition}[Body Composition]\label{def:bodyplus} $\Body_1 + \Body_2 = \Body$\\
		${}_{} \{ \interface?\ [\Cs_1]\ \Ms_1 \} + \{ \interface?\ [\Cs_1]\ \Ms_1 \} =$\\ 
		${}_{}\{ \interface?\ [\Cs_1\ \cup \Cs_2]\ \Ms_1+\Ms_2 \}$
	\end{definition}
	\begin{definition}[Methods Composition]\label{def:msplus} $\Ms_1+\Ms_2=\Ms$\\
		${}_{}\bullet (M\ \Ms_1) + \Ms_2 = M\ (\Ms_1+\Ms_2)$\\
		${}_{}\quad  \mathit{if\ methName(M)}\notin\mathit{dom (\Ms_2)}$\\
		${}_{}\bullet (M_1\ \Ms_1) + (M_2\ \Ms_2) = M_1+M_2\ (\Ms_1+\Ms_2)$\\
		${}_{}\quad \mathit{if\ methName(M_1)= methName(M_2)}$\\
		${}_{}\bullet \emptyset + \Ms = \Ms$
	\end{definition}
	\begin{definition}[Method Composition]\label{def:mplus} $M_1+M_2=M $\\
		${}_{}\bullet S\ \methodKW\ C\ m(C_1\ x_1 \dots C_n\ x_n)\ e;\ +\ S'\ \methodKW\ C\ m(C_1\ \_\dots C_n\ \_);$\\
		${}_{}\quad =\ S\ \methodKW\ C\ m(C_1\ x_1 \dots C_n\ x_n)\ e;$\\
		${}_{}\quad \mathit{if}\  \Pre(S')\ \mathit{implies}\ \Pre(S)\ and\ \Post(S)\ \mathit{implies}\ \Post(S')$\\
		${}_{}\bullet \MH_1;\ +\ \MH_2\ e;\quad  =\quad  \MH_2\ e;\ +\ \MH_1;$\\
		${}_{}\bullet S\ \methodKW\ C\ m(C_1\ x_1\dots C_n\ x_n);\ +\ S'\ \methodKW\ C\ m(C_1\ \_\dots C_n\ \_);$\\
 		${}_{}\quad =\ S\ \methodKW\ C\ m(C_1\ x_1\dots C_n\ x_n);$\\
		${}_{}\quad \mathit{if}\  \Pre(S')\ \mathit{implies}\ \Pre(S)\ \mathit{and\ Post(S)\ implies\ Post(S')}$\\
		${}_{}\bullet S\ \methodKW\ C\ m(C_1\ x_1\dots C_n\ x_n);\ +\ S'\ \methodKW\ C\ m(C_1\ \_\dots C_n\ \_);$\\
		${}_{}\quad =\ S'\ \methodKW\ C\ m(C_1\ x_1\dots C_n\ x_n);$\\
		${}_{}\quad \mathit{if\  (Pre(S)\ implies\ Pre(S')\ and\ Post(S')\ implies\ Post(S))}$\\
		${}_{}\quad \mathit{and\ not\  (Pre(S')\ implies\ Pre(S)\ and\ Post(S)\ implies \Post(S'))}$
	\end{definition}
	\begin{definition}[Body Abstraction]\label{def:bodyabs} $\Body[\mathtt{makeAbstract}\ m]$\\
	${}_{}\quad \{ [\Cs]\ \Ms_1\ S\ \mathtt{method}\ C\ m(\mathit{Cxs}) \_;\  \Ms_2\}[\mathtt{makeAbstract}\ m] $\\ 
	${}_{}\quad = \{ [\Cs]\ \Ms_1\ S\ \mathtt{method}\ C\ m(\mathit{Cxs});\  \Ms_2\}$
	\end{definition}

\begin{figure}[!t]
	\centering
	\begin{scriptsize}
	\begin{mathpar}
		\inferrule
		{
			D'_1 \dots D'_n \vdash D_1 \Downarrow D'_1\\ \dots\\ D'_1 \dots D'_n \vdash D_n \Downarrow D'_n
		}{
			D_1 \dots D_n \Downarrow D'_1 \dots D'_n}
		\quad (\textsc{FlatTop})
		
		\inferrule
		{
			\Ds;\ \Name \vdash E \Downarrow \Body\\
			\mathtt{if}\ \Name\ \mathtt{of\ form}\ C\ \mathtt{then}\ \mathit{abs(Body)} = S\ T\ x_1(); \dots S\ T\ x_n();
		}{
			\Ds \vdash \Name = E \Downarrow \Name = \Body}
		\quad (\textsc{DFlat})
		
		\inferrule
		{
			\Body = \{\interface?\ [\Cs]\ M_1 \dots M_n\}\\\\
			\Body' = \{\interface?\ [\Cs]\ M_1 \dots M_n\ \Ms\}\\
			\Ms = \{\Sigma \mathit{allMeth}(\Ds,\ \Cs,\ m)\ |\  m \in \mathit{dom(Cs)}\ \mathit{and}\ m \notin \mathit{dom(Body)} \}\\
			\Ds;\ \Name \vdash \Body' : \OK
		}{
			\Ds;\ \Name \vdash \Body \Downarrow \Body'}
		 \quad (\textsc{BFlat})
		
		\inferrule
		{
		}{
			\Ds;\ \Name \vdash t \Downarrow \Ds(t)}
		\quad (\textsc{tFlat})
		
		\inferrule
		{
			\Ds;\ \Name \vdash E_1 \Downarrow \Body_1\\
			\Ds;\ \Name \vdash E_2 \Downarrow \Body_2\\
		}{
			\Ds;\ \Name \vdash E_1 + E_2 \Downarrow \Body_1 + \Body_2}
		\quad (\textsc{+Flat})
		
		\inferrule
		{
			\Ds;\ \Name \vdash E \Downarrow \Body\\
			\Body = \{[\Cs]\ \overline{M}_1\ S\ \mathtt{method}\ C\ m(C_1\ x_1 \dots C_n\ x_n) \_;\ \overline{M}_2\}\\
			\Body' = \{[\Cs]\ \overline{M}_1\ S\ \mathtt{method}\ C\ m(C_1\ x_1 \dots C_n\ x_n);\ \overline{M}_2\}
		}{
			\Ds;\ \Name \vdash E[\mathtt{makeAbstract}\ m] \Downarrow \Body'}
		\quad (\textsc{AbsFlat})
	\end{mathpar}
	\end{scriptsize}
	\caption{Flattening rules of \TraitCbC}
	\label{f:flattening}
\end{figure}

\begin{description}
	\item[\textsc{FlatTop}] The first rule flattens a set of declarations $D_1\dots D_n$ to a set $D'_1 \dots D'_n$.	We express this rule in a non-computational way:
	we assume to know the resulting $D'_1 \dots D'_n$, and we use them as a guide to
	compute them.
	Note that if there is a resulting $D'_1 \dots D'_n$ then it is unique;
	flattening is a deterministic process and $D'_1 \dots D'_n$ are used only
	to type check the results. They are not used to compute the shape of
	the flattened code.
	
	Non computational rules like this are common with nominal type
	systems~\cite{igarashi2001featherweight} where the type signatures of all classes and methods can
	be extracted before the method bodies are verified.
	
	\item[\textsc{DFlat}] This rule flattens an individual definition by flattening the trait expression.
	When the flattening produces a class definition, we also check that the body
	denotes an instantiable class; a class whose only abstract methods are
	valid getters. The function $\mathit{abs(Body)}$ returns the abstract methods.
	
	\item[\textsc{BFlat}] It may look surprising that the $\Body$ does not flatten to itself. This represents what happens in most programming languages, where
	implementing an interface implicitly imports the abstract signature
	for all the methods of that interface.
	In the context of verification also the specification of such interface methods is imported.
	In concrete, $\Body'$ is like $\Body$, but we add $\Ms$ by collecting all the methods
	of the interfaces that are not already present in the $\Body$.
	
	Moreover, we check that all the methods defined in the class respect
	the typing and the specification defined in the interfaces:
	if a class has $S\ $\Q@method Foo@ \Q@foo();@ or 
	$S\ $\Q@method Foo foo() e;@
	and there is a $S'\ $\Q@method Foo foo();@ in the interface,
	then $S$ must respect the specification $S'$.
	The system then checks that the $\Body$ is well-typed and verified by calling
	$\Ds;\ \Name \vdash M_i : \OK$
	
	
	\item[\textsc{TFlat}] A trait $t$ is flattened to its declaration $\Ds(t)$.
	
	\item[\textsc{+Flat}] The composition of two expression $E_1$ and $E_2$, where both expressions are first reduced to $\Body_1$ and $\Body_2$, results in the composition of these bodies as defined in Definition~\ref{def:bodyplus}.
	
	\item[\textsc{AbsFlat}] An expression $E$ where one method $m$ is made abstract flattens to a $\Body'$. We know that $E$ flattens to $\Body$. The only difference between $\Body$ and $\Body'$ is that the one method $m$ is abstract in $\Body'$. In $\Body$, the method can be abstract or concrete.
\end{description}

\subsubsection{Soundness of \TraitCbC}

In this subsection, we formulate the main result of the \TraitCbC approach. We prove soundness of the flattening process with a parametric logic.
We claim that if you have a language without code reuse and with sound and modular post-hoc verification
then the language supports CbC simply by adding traits to the language. That is, traits intrinsically enable a CbC program construction approach.


To prove soundness of the construction approach of \TraitCbC (Theorem~\ref{theorem:cbcsound}: Sound CbC Process) as exemplified in Section~\ref{sec:example}, we have to show that the flattening process is correct (Theorem~\ref{theorem:soundness}: General Soundness). 
In turn, to prove General Soundness, we need two lemmas which state that the composition of traits is correct (Lemma~\ref{lemma:flattening}) and that a trait after the $\mathtt{makeAbstract}$ operation is still correct (Lemma~\ref{lemma:makeAbstract}).

In Lemma~\ref{lemma:flattening}, we have well-typed definitions $\Ds$, and two well-typed and verified traits in $\Ds$, and the resulting trait/class is also well-typed and verified.

\begin{lemma}[Composition correct]\label{lemma:flattening}\quad\\
If $\Ds(t1) = \Body_1$,
$\Ds(t2) = \Body_2$, 
$\Ds(\Name) = \Body$,
$\Ds; t_1 \vdash \Body_1 : \OK$,
${}_{}\quad$ $\Ds; t_2 \vdash \Body_2 : \OK$,
and $\Body_1 + \Body_2=\Body$,\\*
then $\Ds; \Name \vdash \Body : \OK$
\end{lemma}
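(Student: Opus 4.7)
The plan is to reduce the statement to method-level verification via rule \textsc{BodyTyped}, then dissect the composed body $\Body$ according to Definitions~\ref{def:bodyplus}--\ref{def:mplus}. By \textsc{BodyTyped}, to conclude $\Ds;\Name \vdash \Body : \OK$ it suffices to show $\Ds;\Name \vdash M : \OK$ for every method $M$ of $\Body$. Each such $M$ is either inherited unchanged from exactly one of $\Body_1$ or $\Body_2$ (the first and third bullets of Definition~\ref{def:msplus}), or it arises from the method-composition operator $+$ of Definition~\ref{def:mplus}.

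First I would handle the easy cases. If $M$ is abstract in $\Body$, rule \textsc{AbsMOK} gives $\Ds;\Name \vdash M : \OK$ immediately. If $M$ is concrete and comes unchanged from $\Body_i$ (with no counterpart in the other body), then by hypothesis $\Ds;t_i \vdash M : \OK$. The remaining case is when $M$ results from composing $M_1 \in \Body_1$ and $M_2 \in \Body_2$: by Definition~\ref{def:mplus}, at least one of the two must be abstract (otherwise the composition is undefined), and the side condition guarantees that the surviving specification $S$ satisfies $\Pre(S') \Rightarrow \Pre(S)$ and $\Post(S) \Rightarrow \Post(S')$ with respect to the discarded specification $S'$. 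In particular, the concrete method chosen by the composition is the one already verified in its original $\Body_i$.

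The key step is then a monotonicity argument: if $\Ds;t_i \vdash M : \OK$ and the method signatures visible through $\this$ only get stronger (weaker precondition, stronger postcondition) when moving from $\Ds(t_i)$ to $\Ds(\Name)$, then $\Ds;\Name \vdash M : \OK$ still holds. I would prove a supporting lemma by induction on the typing derivation $\Gamma \vdash e : C \dashv P_0 \models P_1$, inspecting the rules of Fig.~\ref{f:typing}. For \textsc{x} and \textsc{New} nothing changes. For \textsc{Method}, calls with a non-\texttt{this} receiver resolve through $methods(C_0)$, which does not depend on $\Name$; calls via $\this$ now resolve through $methods(\Name) = \Body$, and by the composition rules the specification $S'$ used in the premise is replaced by a specification $S''$ with $\Pre(S'') \Rightarrow \Pre(S')$ and $\Post(S') \Rightarrow \Post(S'')$. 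Since $\Pre(S'')$ appears as an obligation and is weaker, the obligation is easier to discharge; since the knowledge is added in implication form $\Pre(S'') \Rightarrow \Post(S'')$ and is strictly stronger than $\Pre(S') \Rightarrow \Post(S')$, the previously verified body remains verified. Finally, I would invoke this monotonicity to discharge the \textbf{verify} premise of \textsc{MOK} under the new context.

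The main obstacle is precisely formalizing this monotonicity step: the specifications appear in the Method rule both as assumptions (in implication form inside the knowledge $P$) and as obligations (as the precondition $\Pre(S')$), so one must treat the covariant and contravariant occurrences carefully and show that the combined effect of replacing $S'$ by a refinement $S''$ always improves—or at least preserves—the final verification judgment. A secondary subtlety is that since the logic is parametric, this monotonicity has to be stated as an assumption on the underlying specification language (namely, that logical \emph{and}/\emph{implies} behave monotonically with respect to spec refinement); this assumption is already implicit in the use of \textsc{BFlat} and in Definition~\ref{def:mplus}, so importing it here is justified rather than an additional hypothesis.
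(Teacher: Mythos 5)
Your proposal is correct and follows essentially the same route as the paper's proof: reduce via \textsc{BodyTyped} to per-method verification, dispatch abstract methods with \textsc{AbsMOK}, and observe that for a concrete method carried over from $\Body_i$ the only change is that contracts of methods called on $\this$ become stronger (weaker precondition, stronger postcondition), which preserves the \textbf{verify} judgment by transitivity of the parametric implication --- the paper merely phrases this as a proof by contradiction rather than a direct monotonicity argument. If anything, your explicit treatment of the covariant occurrence (the spec in implication form inside the knowledge $P$) versus the contravariant occurrence (the callee precondition in the obligation) is more careful than the paper's one-line appeal to the contracts being ``simply stronger.''
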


\begin{proof}
	We prove by contradiction. 
	We assume the resulting $\Body$ is ill typed.
	By definition of \textsc{BodyTyped}, it means that one of the methods cannot be
	typed with either \textsc{AbsMOK} or \textsc{MOK}.
	The list of methods that need to be typed is obtained by Definition~\ref{def:bodyplus}.
	
	Abstract methods can only be typed with \textsc{AbsMOK} and are never wrong.
	Implemented methods can only be typed with \textsc{MOK}.
	If $\Gamma \vdash e : C \dashv P \models P'$
	or the other precondition $\textbf{verify}\ \Ds \vdash (\Gamma\ \&\ \Pre(S)\ \&\ P) \models (P'\ \&\ \Post(S))$ does not hold, 
	it means that there was a method $m_i$ with expression $e_i$ in $\Body_1$ (or
	symmetrically for $\Body_2$)
	that was well-typed under $\Ds; t_1\vdash \Body_1$. That means that all of its implemented methods were well-typed and verified.
	Typing $e_i$ produces
	$P_i \models P'_i$ by using a $\Gamma_{t1}$ containing $\this:t_1$.
	
	If $\Ds; \Name \vdash \Body : \OK$ is not applicable, the same
	expression $e_i$
	was typed using a $\Gamma_{\Name}$ containing $\this:\Name$. It produced $P''_i \models P'''_i$
	so that $\textbf{verify}\ \Ds \vdash (\Gamma_{\Name}\ \&\ \Pre(S)\ \&\ $ $P''_i) \implies (P'''_i\ \&\ \Post(S))$ does not hold.
	We know that $\textbf{verify}\ \Ds \vdash (\Gamma_{t1}\ $ $\&\ \Pre(S)\ \&\ P_i) \implies (P'_i\ \&\ \Post(S))$ holds by our assumption.
	By Definition~\ref{def:mplus}, the contracts of the methods in $\Body$ are simply stronger than the contracts of the methods in $\Body_1$.
	The only difference between $P''_i \models P'''_i$ and $P_i \models P'_i$ is in
	the contracts of methods called on $\this$.
	Assuming that our parametric logic implication is transitive,
	we know that $\textbf{verify}\ \Ds \vdash (\Gamma_{t1}\ \&\ \Pre(S)\ \&\ P_i) \implies (P'_i\ \&\ \Post(S))$ entails $\textbf{verify}\ \Ds \vdash (\Gamma_{\Name}\ \&\ \Pre(S)\ $ $\&\ P''_i) \implies (P'''_i\ \&\ \Post(S))$,
	thus we reach a contradiction.
\end{proof}

Lemma~\ref{lemma:makeAbstract} shows that if we have a well-typed and verified trait, the
operation $\mathtt{make-}$ $\mathtt{Abstract}$ results in a trait/class that is also well-typed and verified.

\begin{lemma}[MakeAbstract correct]\label{lemma:makeAbstract}\quad\\
If $\Ds(t) = \Body$,
$\Ds(\Name) = \Body'$,
$\Ds; t \vdash \Body : \OK$,\\* 
${}_{}\quad$ and $\Body[\mathtt{makeAbstract}\ m] = \Body'$,\\*
then $\Ds; \Name$ $\vdash \Body' : \OK$
\end{lemma}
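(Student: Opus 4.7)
The plan is to mirror the structure of the proof of Lemma~\ref{lemma:flattening} but exploit the fact that \textsc{makeAbstract} is even easier to handle: it leaves every method header (and hence every specification) untouched, and merely drops one method body. So I would reduce the goal via \textsc{BodyTyped} to showing that every method of $\Body'$ is individually OK under $\Ds;\Name$, then split on whether the method is $m$ or not.

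First, I would apply \textsc{BodyTyped}, which asks us to establish $\Ds;\Name \vdash M'_i : \OK$ for each method $M'_i$ of $\Body'$. By Definition~\ref{def:bodyabs}, the method list of $\Body'$ is obtained from that of $\Body$ by replacing the implemented method $S\ \methodKW\ C\ m(\mathit{Cxs})\ e;$ with the abstract header $S\ \methodKW\ C\ m(\mathit{Cxs});$ and leaving all other methods unchanged. For the replaced method $m$, I would immediately discharge the obligation by \textsc{AbsMOK}, which applies unconditionally to abstract methods.

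For every other method $M'_i$ of $\Body'$, I have the identical method $M_i$ in $\Body$, and the hypothesis $\Ds; t \vdash \Body : \OK$ yields $\Ds; t \vdash M_i : \OK$ by \textsc{BodyTyped}. The only difference between this derivation and the one I need, $\Ds;\Name \vdash M_i : \OK$, is that the typing context built inside \textsc{MOK} binds $\this$ to $\Name$ rather than to $t$. Because the verification obligation in \textsc{MOK} invokes \textbf{verify} on $\Ds$ together with $\Gamma$, and because the method specifications accessible through $\Ds$ for any call $\this.m'(\dots)$ are exactly the same in $\Body$ and in $\Body'$ (specifications survive \textsc{makeAbstract} verbatim, per Definition~\ref{def:bodyabs}), the collected knowledge $P$ and obligation $P'$ computed by the \textsc{x}, \textsc{Method}, \textsc{New}, and \textsc{Sub} rules for the body $e_i$ are syntactically identical modulo the name of the $\this$ type. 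Consequently, the verifier call that succeeded for $\Ds; t \vdash M_i$ succeeds again for $\Ds; \Name \vdash M_i$.

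The step I expect to be the main subtlety is the last one: arguing that renaming $\this$'s type from $t$ to $\Name$ in $\Gamma$ does not invalidate any specification lookup performed by the typing rules or the underlying verifier. Unlike Lemma~\ref{lemma:flattening}, I do not even need to appeal to the Liskov-style strengthening in Definition~\ref{def:mplus}, because here the contracts on methods of $\this$ are preserved on the nose rather than merely strengthened; a transitivity argument on the parametric logic is therefore unnecessary, and a direct inspection of each typing rule suffices. If the formal development prefers, this can be packaged as a small weakening/renaming lemma stating that $\Ds; t \vdash M : \OK$ and $\Ds(\Name) \supseteq \Ds(t)$ on the specifications of all methods of $m$ on $\this$ imply $\Ds;\Name \vdash M : \OK$, whose proof is a straightforward induction on the typing derivation.
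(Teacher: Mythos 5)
Your proposal is correct and follows essentially the same route as the paper's proof: reduce via \textsc{BodyTyped} to per-method obligations, discharge the newly abstract $m$ with \textsc{AbsMOK}, and observe that the remaining methods type-check in the new $\Gamma$ because \textsc{makeAbstract} preserves all specifications verbatim, so (unlike the composition lemma) no transitivity of implication is needed. The only difference is presentational --- you argue directly where the paper phrases the same argument as a proof by contradiction --- and your explicit remark about the $\this$-type renaming makes the one genuinely delicate step more visible than the paper does.
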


\begin{proof}
	We prove by contradiction. 
	We assume the resulting $\Body'$ is ill typed.
	By definition of \textsc{BodyTyped}, it means that one of the methods cannot be
	typed with either \textsc{AbsMOK} or \textsc{MOK}.
	The list of methods that need to be typed is obtained by Definition~\ref{def:bodyplus}.
	
	Abstract methods can only be typed with \textsc{AbsMOK} and are never wrong.
	We know that $\Body$ is typable by our assumption. The only difference between $\Body$ and $\Body'$ is that the method $m$ is made abstract. As we have seen for Lemma~\ref{lemma:flattening}, we are typing $\Body'$ in a different $\Gamma$. This case is even simpler than Lemma~\ref{lemma:flattening} because $\Body$ and $\Body'$ have exactly the same specifications. The abstract method $m$ and thus $\Body'$ cannot be ill typed.
\end{proof}

With these Lemmas, we can prove Theorem~\ref{theorem:soundness}. Given a sound and modular verification language, then all programs that flatten are well-typed and verified.
In a modular verification language, a method can be fully verified using only the information contained in the method declaration and the specification of any used method.
Moreover, our parametric logic must support at least a commutative and associative \textit{and} (but of course other ways to merge knowledge could work too) and a transitive \textit{implication} (but of course other forms of logical consequence could work too).

\begin{theorem}[General Soundness]\label{theorem:soundness}\quad\\
For all programs $\Ds$ where $\Ds$ flattens to $\Ds'$, and $\Ds'$ is well-typed;\\*
that is, fo rall $\Name=\Body \in \Ds'$,
we have $\Ds';\ \Name \vdash \Body : \OK$.
\end{theorem}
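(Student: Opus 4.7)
The plan is to prove the theorem by induction on the structure of the flattening derivations, invoking Lemma~\ref{lemma:flattening} for the composition case and Lemma~\ref{lemma:makeAbstract} for the abstraction case. Since \textsc{FlatTop} is non-computational and presupposes the target $\Ds'$, I would first fix $\Ds'$ and observe that for each entry $\Name_i = \Body_i \in \Ds'$ the rule \textsc{DFlat} gives a derivation $\Ds'; \Name_i \vdash E_i \Downarrow \Body_i$ where $\Name_i = E_i$ is the original declaration in $\Ds$. The goal $\Ds'; \Name_i \vdash \Body_i : \OK$ then reduces to an inductive argument over this derivation, after which \textsc{DFlat} contributes only the instantiability side condition for class declarations, which is one of its own premises.

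For the case analysis on the last flattening rule applied: if it is \textsc{BFlat}, then $\Ds'; \Name \vdash \Body' : \OK$ is already a premise of the rule, so the conclusion is immediate. If it is \textsc{tFlat}, then $\Body_i = \Ds'(t)$ for some trait name $t$, and I would appeal to the well-typedness of $\Ds'(t)$, which comes from the flattening derivation of trait $t$'s declaration. If the last step is \textsc{+Flat}, the induction hypothesis yields well-typedness of $\Body_1$ and $\Body_2$, and Lemma~\ref{lemma:flattening} delivers $\Ds'; \Name \vdash \Body_1 + \Body_2 : \OK$. If it is \textsc{AbsFlat}, the induction hypothesis gives well-typedness of the inner $\Body$, and Lemma~\ref{lemma:makeAbstract} yields the conclusion.

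The main obstacle is the mutual dependency between traits arising from \textsc{tFlat}: a trait $t_1$ may reference $t_2$, which in turn may reference $t_3$, so the natural induction on a single derivation is not quite sufficient on its own. However, the well-formedness conditions stated just before Fig.~\ref{f:syntax} explicitly rule out circular trait definitions such as $t = t$ or $t_1 = t_2$ paired with $t_2 = t_1$, so the trait-dependency relation is acyclic and the collection of flattening derivations is globally well-founded. I would therefore carry out the induction with respect to the lexicographic order combining trait-dependency rank and the structural size of the trait expression, which turns the \textsc{tFlat} case into an application of the outer induction rather than a genuine extra obligation, and relies on the assumed commutativity, associativity, and transitivity properties of the parametric logic to keep the lemma applications compatible with this ordering.
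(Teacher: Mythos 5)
Your proposal is correct and follows essentially the same route as the paper: a case analysis on the last flattening rule applied, discharging \textsc{BFlat} via its own well-typedness premise, \textsc{tFlat} by appeal to the already-flattened declaration, \textsc{+Flat} via Lemma~\ref{lemma:flattening}, and \textsc{AbsFlat} via Lemma~\ref{lemma:makeAbstract}. Your explicit justification of well-foundedness through the acyclicity of trait definitions makes precise what the paper compresses into ``induction on the size of $\Ds$,'' but it is the same argument.
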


\begin{proof}
	By induction on the size of $\Ds$, and by induction on cases of $E$ (the
	applied flattening rule for $E$).
	\begin{itemize}
		\item $\Body$ only flattens if the $\Body$ can be shown to be well-typed.
		\item $t$ only reads a trait from the already verified $\Ds'$.
		\item $\Body_1+\Body_2$ is correct with Lemma~\ref{lemma:flattening}. The lemma can be applied directly, if $E$ is of depth one (e.g., $\Body_1 + \Body_2$). If $E$ is more complex, we have to apply other cases of this case analysis.
		\item $\mathtt{makeAbstract}$ is handled similarly using Lemma~\ref{lemma:makeAbstract}.
		\item By the flattening relation, we know that $\Body_1$ and $\Body_2$ are well-typed in $\Ds$.
		If we start from a program containing only well-typed and verified traits,
		any new class we can define by just composing those traits is well
		typed and verified. \qedhere
	\end{itemize}
\end{proof}

We now show that the \TraitCbC approach is sound. Theorem~\ref{theorem:cbcsound} states that starting with one abstract method and a set of verified traits, the composed program is also verified.

\begin{theorem}[Sound CbC Process]\label{theorem:cbcsound}\quad\\
	Starting from a fully abstract specification $t_0$, and some construction steps $t_1 \dots t_n$, we can write $C = t_0 + \dots + t_n$ as our whole CbC approach, where $t_0 + t_1$ is the application of the first construction step. 
	If we use CbC to construct programs, we can start from verified atomic units and get a verified result. Formally, if $t_0 = \{\MH\}\
	t_1 = \{\Ms_1\}\
	\dots\
	t_n = \{\Ms_n\}$ are well-typed, and\\
	$\begin{array}{lcl}
		t_0 = \{\MH\}&&t_0 = \{\MH\}\\
		t_1 = \{\Ms_1\}\
		\dots\
		t_n = \{\Ms_n\} &\quad\Downarrow \quad\quad & t_1 = \{\Ms_1\}\
		\dots\
		t_n = \{\Ms_n\}\\
		C = t_0 + \dots + t_n && C = \Body\\
	\end{array}$\\
	then $C=\Body$ is well-typed.
\end{theorem}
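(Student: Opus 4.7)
The plan is to recognize that this statement is essentially an instance of Theorem~\ref{theorem:soundness} applied to the declaration table obtained by bundling the given traits with the class declaration. I would first assemble
\[
\Ds \;=\; \{\, t_0 = \{\MH\},\ t_1 = \{\Ms_1\},\ \dots,\ t_n = \{\Ms_n\},\ C = t_0 + \dots + t_n \,\},
\]
and observe that by the hypotheses each trait in $\Ds$ is well-typed individually. The task then reduces to exhibiting a flattened counterpart $\Ds'$ for $\Ds$, since Theorem~\ref{theorem:soundness} immediately yields $\Ds';\ \Name \vdash \Body' : \OK$ for every declaration in $\Ds'$, and in particular for the class $C$.

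For the flattening itself, each trait $t_i$ would reduce via \textsc{DFlat} and \textsc{BFlat} to a body $\Body_i$, up to the routine import of abstract method headers from any implemented interfaces. The class definition $C = t_0 + \dots + t_n$ would then be flattened by repeated application of \textsc{+Flat} and \textsc{tFlat}, producing the iterated body composition $\Body_0 + \dots + \Body_n$, which collapses under Definitions~\ref{def:bodyplus}--\ref{def:mplus} into a single $\Body$. Invoking Theorem~\ref{theorem:soundness} on $\Ds$ then gives $\Ds';\ C \vdash \Body : \OK$, which is the desired conclusion.

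The main obstacle is justifying that this iterated composition is actually defined. Definition~\ref{def:mplus} leaves the combination of two concrete methods with the same name undefined, so one must show that this forbidden case never arises during a legitimate CbC construction. I would handle this by induction on the partial sum $\Body_0 + \dots + \Body_k$, using Lemma~\ref{lemma:flattening} as the inductive step: each intermediate body is well-typed and verified, so $\Body_{k+1}$ can only add new abstract headers, provide an implementation for a method that is still abstract, or refine specifications in the direction required by the Liskov-style implications of Definition~\ref{def:mplus}. Under this discipline every pairwise composition falls into one of the three defined cases of Definition~\ref{def:mplus}, and well-typedness propagates through the whole sum via iterated application of Lemma~\ref{lemma:flattening}. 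Lemma~\ref{lemma:makeAbstract} is not needed for the theorem as stated since $\mathtt{makeAbstract}$ does not occur in the construction, but an analogous induction with that lemma would extend the argument to construction steps that retract an implementation.
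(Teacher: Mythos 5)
Your proposal matches the paper's own proof, which consists of the single sentence that this theorem is a special case of Theorem~\ref{theorem:soundness}; your elaboration of the declaration table, the flattening via \textsc{tFlat}/\textsc{+Flat}, and the final appeal to General Soundness is exactly the intended reduction. The only remark worth making is that the definedness of the iterated composition, which you spend a paragraph establishing by induction with Lemma~\ref{lemma:flattening}, is already guaranteed by the hypothesis (the flattening judgment $\Downarrow$ is assumed to hold in the theorem statement), so that part of your argument, while a sensible sanity check on when the CbC process succeeds, is not needed for the theorem as stated.
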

\begin{proof}
	This is a special case of Theorem~\ref{theorem:soundness}.
\end{proof}

Theorem~\ref{theorem:cbcsound} shows clearly that trait composition intrinsically enables a CbC approach: 
An object-oriented programming language with traits
and a corresponding specification language supports an incremental CbC approach.

\subsection{Proof-of-Concept Implementation}
\label{sec:impl}

In this section, we describe the implementation, which instantiates \TraitCbC in Java with JML~\cite{leavens1998jml} as specification language and KeY~\cite{ahrendt2016deductive} as verifier for Java code. Our trait implementation is based on interfaces with default implementation. Our open source tool is implemented in Java and integrated as plug-in in the Eclipse IDE.\footnote{Tool and evaluation at \url{https://doi.org/10.5281/zenodo.7766635}}
Besides this prototype, other languages with a suitable verifier, such as Dafny~\cite{leino2010dafny} and OpenJML~\cite{cok2011openjml}, can also be used to implement \TraitCbC.

\hl{In Listing~\ref{code:exampleJava}, we show the concrete syntax of our implementation. Each method in a trait is specified with JML with the keywords \Q@requires@ and \Q@ensures@ for the pre- and postcondition. 
To verify the correctness of programs, we need two steps. First, we verify the correctness of a method implemented in a trait w.r.t. its specification. 
Second, for trait composition, our implementation checks the correct composition for all methods (cf. Definition~\ref{def:bodyplus}). 
The syntax of trait composition is shown in Listing~\ref{code:exampletraitcompo}. In a tc-file (a file to specify the traits to be composed), the name of the resulting trait is given and the composed traits are connected with a plus operator. In Listing~\ref{code:exampletraitcompo}, trait \Q@MaxElement1@ is composed with trait \Q@MaxElement2@. The trait \Q@MaxElement2@ must implement the methods \Q@accessHead@ and \Q@maxTail@, so that we obtain a correct result in which all methods are implemented.
To verify correctness of the trait composition, it is checked that the specification of a concrete method satisfies the specification of the abstract one with the same signature (cf. Definition~\ref{def:mplus}). These verification goals are sent to KeY, which starts an automatic verification attempt.}

\begin{figure}[ht]
\begin{lstlisting}[caption={Example of a trait in our implementation},captionpos=b,label={code:exampleJava}]
public interface MaxElement1 {
/*@ requires list.size() > 0;
  @ ensures (\forall int n; list.contains(n);
  @ \result >= n) & list.contains(\result);
  @*/
  default public int maxElement(List list) {
    if (list.size() == 1) return accessHead(list);
    if (list.element() >= maxElement(list.tail())) 
      { return accessHead(list); }
    else { return maxTail(list); } }

/*@ requires list.size() > 0;
  @ ensures \result == list.element();
  @*/
  public int accessHead(List list);

/*@ requires list.size() > 1;
  @ ensures (\forall int n; list.tail().contains(n);
  @ \result >= n) & list.tail().contains(\result);
  @*/
  public int maxTail(List list);
}
\end{lstlisting}
\end{figure}

\begin{figure}[ht]
	\begin{lstlisting}[caption={Example of a trait composition},captionpos=b,label={code:exampletraitcompo}]
	ComposedMax = MaxElement1 + MaxElement2
	\end{lstlisting}
\end{figure}

\paragraph{Evaluation}

We evaluate our implementation by a feasibility study. First, we reimplemented an already verified case study in our trait-based language. We used the\linebreak IntList~\cite{scholz2011intlist} case study, which is a small software product line (SPL) with a common code base and several features extending this code base. Here, we can show that our trait-based language also facilitates reuse.
The IntList case study implements functionality to insert integers to a list in the base version. Extensions are the sorting of the list and different insert options (e.g., front /back). We implement five methods that exists in different variants with our trait-based CbC approach.
We implement the case study in different granularities. The coarse-grained version is similar to the SPL implementation we started with~\cite{scholz2011intlist}, confirming that traits are also amenable to implement SPLs as shown by Bettini et al.~\cite{bettini2010implementing}. The fine-grained version implements the five methods incrementally with 12 construction steps. We can reuse 6 of these steps during the construction of method variants.

We also implement three more case studies (BankAccount~\cite{thuem2012familybased}, Email~\cite{hall2005fundamental}, and Elevator~\cite{plath2015elevator}) with \TraitCbC and classic CbC to show that it is feasible to implement object-oriented programs with both approaches. We used \corc~\cite{runge2019tool} as an instance of a classic CbC tool.
We were able to implement 9 classes and verify 34 methods with a size of 1--20 lines of code. For future work, a user study is necessary to evaluate the usability of \TraitCbC in comparison to classic CbC to empirically confirm our stated advantages.

\section{The Different CbC-based Program Construction Approaches in Comparison}
\label{sec:discussion}
In this section, we discuss classic CbC in comparison to \CbCBlock and \TraitCbC.
In Table~\ref{tab:compare}, we summarize how the three approaches compare regarding main aspects of developing correct programs using tool support.
The aspects comprise the programming language, the tool support, the procedure to develop programs, and the verification of the program.

\begin{table}[tb]
	\restoretx
	\centering
	\begin{scriptsize}
	\begin{tabularx}{\linewidth}{nXXX}
		\toprule
		 & Classic CbC	& \CbCBlock & \TraitCbC\\ 
		\hline
		Language & Additional refinement rules for a programming language. Needs specification language.	& Additional refinement rules for a programming language. Introduces a specified block of statements. Needs specification language. & Programming language with traits. Needs specification language.\\
		\hline 
		Tool \newline support & Pen and paper. Some specialized tools available.& Pen and paper. Some specialized tools available. Block instantiation rule relies on post-hoc verification  tools. & Relies on post-hoc verification tools.\\ 
		\hline
		Construc- tion Rules & Specific refinement rules. & Specific refinement rules. & Construction by composition of traits.\\ 
		\hline 
		Correctness/
		Debugging & Guarantees the correctness of each refinement step. & Guarantees the correctness of each refinement step. Refinements can be condensed with the block rules. & Guarantees the correctness of each construction step. Each method is specified so that each constructed method can directly be verified.\\
		\hline
		Proof\newline complexity & Many, but small proofs. & Any granularity of proofs. & Any granularity of proofs.\\
		\hline
		Reuse & 
Refinement steps cannot be reused; only fully implemented methods can.
 & Refinement steps cannot be reused; only fully implemented methods can. & Each verified method in a trait can be reused.\\
 		\hline
 Applications & Focuses on small, but correctness-critical algorithms.
 & Focuses on correctness-critical algorithms. & As \TraitCbC is based on post-hoc verification, it can be used in similar areas where post-hoc verification is used. Traits are beneficial for incremental development and software product lines.\\
		\bottomrule
	\end{tabularx}
	\end{scriptsize}
	\caption{Comparison of \TraitCbC with \CbCBlock and classic CbC}
	\label{tab:compare}
\end{table}


\noindent\textbf{Language.} All approaches need an underlying programming and specification language.
The defined refinement rules of the classic CbC approach are external to a programming language. That means, each refinement rule introduces some statement of the programming language by transforming the program.
With \CbCBlock and the block-instantiation rule more than one statement of the language can be introduced at once.
\TraitCbC is usable with languages that have traits. Methods can be implemented as defined by the language. No refinement rules are necessary.

\noindent\textbf{Tool Support.}
Tool support is helpful for any of the approaches. For classic CbC, mostly pen and paper is used. There are a few specialized tools such as \corc~\cite{runge2019tool}, ArcAngel~\cite{oliveira2003arcangel}, and SOCOS~\cite{back2009invariant,back2007testing}. These tools force a certain programming procedure on the user because refinement rules must be applied to implement programs.
\CbCBlock is implemented in \corc and extends the set of refinement rules with the new rules for blocks. To verify the correctness of block instantiations, program verifiers can be reused.
There are program verifiers for many languages, such as Java~\cite{ahrendt2016deductive}, C~\cite{cohen2009vcc}, and C\#~\cite{Barnett:2011:SVS:1953122.1953145,barnett2004spec}.
Other languages are integrated with their verifier from the start, e.g., Spec\#~\cite{barnett2004spec} and Dafny~\cite{leino2010dafny}.
For \TraitCbC, we also need a program verifier to prove the correctness of method implementations, but we do not need specialized tools to construct methods, such as \corc.

\noindent\textbf{Construction Rules.}
To construct a program, classic CbC has a strict concept of refinement rules that must be applied to construct a program.
\CbCBlock relaxes this strict guideline to construct programs. 
Programs can be constructed stepwise as with classic CbC, but if desired, any number of refinement steps can be condensed with the block rules. In the extreme case, a whole program can be developed in one step.
\TraitCbC offers this flexibility to construct programs without the need of external refinement rules.
Methods can be developed freely and only need to be composed with respect to their specification. Nevertheless, \TraitCbC supports to construct code in fine-grained steps, which are more amenable for verification than more complex methods.

\noindent\textbf{Correctness/Debugging.}
Classical CbC gives explicit information about the program states before and after execution of each statement by the Hoare triple notation.
The correctness of each applied refinement step is guaranteed by proving the side conditions of the refinement rule. 
Some side conditions are not directly provable because abstract statements in Hoare triples must be concretized first. In the worst case, a problem in the program is found only after some refinement steps. 
The abstract statements in classic CbC are not explicitly specified by the developer. Additional specifications in classic CbC are introduced with some rules such as an intermediate condition in the composition rule. Then, these specifications are propagated through the program to be constructed. Again, due to a possible delayed check of a side condition, a wrong specification is found only after some refinement steps.

If errors occur in the program development process, \TraitCbC gives early and detailed information on the level of verified methods. By specifying the method under development and any abstract method that is called by this method, we can directly verify the correctness of the method under development.
We assume that the introduced abstract methods will be correctly implemented in further refinement steps. With each step, the developer gets closer to the solution until finally all abstract methods are implemented. 
\CbCBlock combines the characteristics of the other two approaches. The refinement rules of classic CbC can be applied, or blocks of statements can be introduced. The specified block is verified similar to a method in \TraitCbC. 

\noindent\textbf{Proof Complexity.}
Classical CbC requires many small proofs to guarantee the correctness of a program. \CbCBlock can condense the proofs into larger proofs using the block refinement rules.
\TraitCbC can have the same granularity and also the same proof effort as classic CbC, since each method implementation can correspond to just one refinement step. 
The advantage of \TraitCbC and \CbCBlock is that developers can freely implement a method body or a block. They must not stick to the same granularity as in the classic CbC refinement rules. Proof complexity can be balanced against verifier calls.

\noindent\textbf{Reuse.}
A fully refined method can be reused in all approaches. 
For \TraitCbC, we can easily reuse even very small units of code, since they are represented as methods in the traits.
In classic CbC and \CbCBlock, no refinement step can be reused. 

\noindent\textbf{Applications.}
\hl{The classic CbC approach does not scale well to development procedures for complete software system. Rather, individual algorithms can be developed with\linebreak CbC~\cite{watson2016correctness}. 
With the block rules, the scalability is improved because refinement steps that are easy to prove can be combined into one block. This saves the application of refinement rules and their corresponding correctness proofs. 
With \textsc{ArchiCorC}~\cite{knuppel2020scaling}, we can even scale CbC to the development of correct component-based architectures. By composing components specified with required and provided interfaces, we support the creation of software architectures correct by construction.}

As soon as we scale \TraitCbC to real languages, we have the same application scenarios as approaches that already use post-hoc program verification. As argued by Damiani et al.~\cite{damiani2014verifying}, traits enable an incremental process of specifying and verifying software. Bettini et al.~\cite{bettini2010implementing} proposed to use traits for software product line development and highlighted the benefits of fine-grained reuse mechanisms. Here, \TraitCbC's guideline is suitable for constructing new product lines step by step from the beginning.

\hl{Since \CbCBlock extends classic CbC and can be freely applied at any granularity of refinement steps, we propose to use \CbCBlock for any implementation of correctness-critical software, but the CbC approach must be well understood by the developer to be efficiently usable.
In \TraitCbC, methods are developed and composed directly, so less knowledge is needed to apply the approach, but developers can fall back into a post-hoc verification process and thus lose the benefits of CbC (e.g., if the developers first develop all methods and do not directly prove the correctness). In general, both approaches are usable for program development and the right choice depends on the preferences and prior knowledge of the developers.}

\noindent\textbf{Summary.}
In summary, \TraitCbC and \CbCBlock allow more flexible program construction without losing the advantages of incremental correct-by-construction program development.  \CbCBlock loosens the strict guideline of classic CbC by adding the block refinement rules. \CbCBlock still needs specialized tools, such as \corc to be applicable.
\TraitCbC enables a CbC approach for trait-based languages without introducing refinement rules. This program construction approach combined with the flexibility of traits allows correct methods to be developed in small and reusable steps. \TraitCbC is independent of special CbC tools and requires only a program verifier.

\section{Related Work}

In the following, we discuss related work for specifying and verifying software. We discuss related correctness-by-construction approaches and compare \corc with other tools for CbC.

\paragraph{Contracts and Program Verification}

The implementation of CbC in \corc and the implementation of \TraitCbC use JML, Java DL and Java to specify and write programs. For the verification, KeY~\cite{ahrendt2016deductive} is integrated in the backend. KeY is a deductive program verifier for Java programs specified with JML. In an intermediate step, the specified programs are translated to Java DL. Similarly, OpenJML~\cite{cok2011openjml} verifies Java programs specified with JML.
Besides Java/JML, many languages support pre-/postcondition contracts or other forms of specification to state program behavior. First, the programming language Eiffel introduced contracts and supported the design-by-contract approach~\cite{meyer1988eiffel,meyer1992applying}. Eiffel is an object-oriented programming language, where classes are specified with invariants, and methods with pre-/postconditions contracts. For the verification, AutoProof~\cite{khazeev2016initial,tschannen2015autoproof} is integrated that translates the specified program to a logic formula. Then, an SMT-solver proves the validity of the formulas.
For C\#, the language Spec\#  is an extension to introduce contracts and invariants~\cite{barnett2004spec,Barnett:2011:SVS:1953122.1953145}. The verification is done by translating the proof obligations to an intermediate language BoogiePL that can be verified with Boogie~\cite{barnett2005boogie}.
For the C language, the VCC~\cite{cohen2009vcc} and Frama-C~\cite{cuoq2012frama} tools verify annotated C code. VCC reuses the Spec\# tool chain. For Java and C, the VeriFast~\cite{jacobs2010quick} tool verifies C and Java programs.
VerCors~\cite{amighi2014verification} also support the verification of C and Java programs with a focus on concurrent and distributed software.
Another language with integrated specifications and verification is Dafny~\cite{leino2010dafny}. Dafny is a functional language, but supports the compilation to other languages such as C\#, Java, Go, and Python. Similarly, Whiley~\cite{pearce2013whiley} is a designed language with associated verifier to simplify the verification of programs.
The languages SPARK~\cite{barnes2003high} supports a subset of the Ada language to specify and verify Ada programs. In contrast to JML, the specification is not written as comments, but the Ada \textit{aspect}-syntax is used to express contracts.
The focus of all these languages and verification tools is the specification of program behavior and the verification that a program satisfies its specification.
With CbC (\CbCBlock and \TraitCbC), we put the correct construction of programs in the foreground, instead of just verifying the correctness post-hoc. However, Watson et al.~\cite{watson2016correctness} argue that correctness-by-construction and post-hoc verification can be used together to combine their mutual strengths.

To verify trait languages, Damiani et al.~\cite{damiani2014verifying} added specifications of methods in traits to verify correct trait composition. They proposed a modular and incremental verification process.
Traits are introduced in many languages to support clean design and reuse, for example Smalltalk~\cite{ducasse2006traits}, Java~\cite{bono2014trait} by utilizing default methods in interfaces, and other Java-like languages~\cite{bettini2013traitrecordj,liquori2008feathertrait,smith2005chai}.
None of these trait languages were used to formulate a CbC approach to create correct programs. They only focus on code reuse or post-hoc verification.

\paragraph{Refinement-Based Correctness-by-Construction}

The main idea of correctness-by-construction is the stepwise construction of a program from a starting specification with correctness guarantees for each step. 
We focused on correctness-by-construction by Kourie and Watson~\cite{kourie2012correctness} that we called classic CbC. This classic CbC approach is based on Dijkstra~\cite{dijkstra-book} and Gries~\cite{gries-book}.
In this paragraph, we discuss related refinement-based CbC approaches. All of these approaches create correct programs by refining an abstract program or system to a concrete implementation. This is the main difference to the composition-based CbC approach of \TraitCbC, where atomic units of code are composed to whole programs.

Morgan's refinement calculus~\cite{morgan-book} is similar to correctness-by-construction by Kourie and Watson~\cite{kourie2012correctness}. Both approaches have the same theoretical foundation, but Morgan's refinement calculus is more elaborated with a large number of different refinement rules, where many rules are only formally interesting. Kourie and Watson~\cite{kourie2012correctness} reduced the refinement rules to a minimal but sufficient set, such that CbC becomes comprehensible for developers without a major background in formal methods. The language ArcAngel~\cite{oliveira2003arcangel} with the verifier ProofPower~\cite{zeyda2009proofpower} implements Morgan's refinement calculus. The tool uses a tactic language to apply a sequence of refinement rules for program refinement. Thus, a tactic has the same benefit as the application of a block refinement in \CbCBlock because the application of refinement rules is condensed to one refinement step. The difference is that for an introduced block of code in \CbCBlock, it does not matter what classic CbC refinement rules would have to be applied to introduce that block of code. A tactic still applies the refinement rules stated in that tactic sequentially.

The invariant based programming~\cite{back2012refinement,back2009invariant} shifts the focus from pre-/postcondition contracts as starting point for refinements to invariants. The tool SOCOS~\cite{back2009invariant,back2007testing} implements Back's methodology. Similar to \corc, SOCOS has a graphical user interface to create a program in the form of a UML-style state chart. Refinement steps introduce new states and transitions in the state chart and check compliance with the invariants. A completely refined program is proved correct and executable code can be generated. In \corc, the  graphical user interface present the refinement steps in a hierarchical tree structure that more directly represent the structure of the code (comparable with an abstract syntax tree). Therefore, \corc and also the implementation of \TraitCbC are on the level of source code.

Further refinement-based methodologies are Event-B~\cite{EventB,abrial2010rodin} for automata-based systems and Circus~\cite{oliveira2009utp,oliveira2008crefine} for state-rich reactive systems. 
Both methodologies work on an abstraction level with abstract models instead of specified source code. In refinement steps these abstract models of the system are transformed to concrete and executable implementation. Here, each refined result guarantees conformations with the initial model.
Event-B is supported by the tool Rodin~\cite{abrial2010rodin}, and Circus is supported by the tool CRefine~\cite{oliveira2008crefine}.
The main difference to CbC by Kourie and Watson~\cite{kourie2012correctness}, and \TraitCbC is the abstraction level. We specify and verify source code rather than automata-based systems.

\hl{Data refinement~\cite{Haftmann2013,haslbeck2022few,lammich2012applying,cohen2013refinements} is a related approach that focuses on the refinement of (abstract) programs with abstract types to correct and more efficient programs with concrete types. Haftmann et al.~\cite{Haftmann2013} examine how the Isabelle/HOL code generator applies data refinements to produce executable versions of abstract programs. Cohen et al. \cite{cohen2013refinements} present an approach to refine Coq programs to enhance computational efficiency.
Haslbeck and Lammich~\cite{haslbeck2022few} not only ensure functional correctness during data refinement, they also verify worst-case complexity of algorithms at the
LLVM level. The main difference to CbC by Kourie and Watson~\cite{kourie2012correctness} is that data refinement approaches start with algorithms on abstract data structures that are refined to more concrete data structures, whether CbC by Kourie and Watson focuses on the incremental development of the algorithm itself. Therefore, both approaches can used in concert to develop more efficient algorithm.}

\paragraph{Extensions to Correctness-by-Construction and \corc}

\corc has been extended in several directions to allow the structured program development for larger software systems and further application areas. 
With \textsc{ArchiCorC}~\cite{knuppel2020scaling}, we integrate the construction of correct software architectures. We bundle \corc programs into reusable software components. The components communicate via required and provided interfaces where \textsc{ArchiCorC} guarantees the compatibility between them.
With \textsc{VarCorC}~\cite{bordis2020varcorc2} software product lines are developed correct by construction. A software product lines is used to systematically construct a family of similar software programs instead of developing monolithic programs. \textsc{VarCorC} ensures the correctness of all possible software variants of the product line.
In addition to functional correctness, correctness-by-construction and \corc are extended to guarantee nonfunctional properties. As a first example, we introduced CbC refinement rules to ensure that programs~\cite{runge2020lattice,runge2022ifbcoo} follow an information flow policy which defines the allowed flow of information in a program.
In every refinement step, security and functional correctness of the program is guaranteed, such that insecure and incorrect programs are prohibited by construction.
The goal of these extensions is that program development in \corc is scalable and that CbC can be used for additional application areas. Orthogonally, this article focuses on improving the flexibility of developing programs correct by construction (e.g., by introducing the block refinement rules).

\paragraph{Program and Specification Synthesis}

Program synthesis is a technique that generates programs from user given specifications automatically.
Pioneers in this field are Manna et al.~\cite{Manna1980}. Gulwani et al.~\cite{gulwani2017program} give an overview of state-of-the-art program synthesis approaches.
For example, for Fortran, Stickel et al.~\cite{stickel1994deductive} deductively extract programs from user-given graphical specifications. They compose procedures from libraries to full implementations. Similarly, Gulwani et al.~\cite{gulwani2010component} synthesize programs by composing base components from a specified library.
Polikarpova et al.~\cite{Polikarpova2016} synthesize recursive programs from specifications by utilizing type information.
Similarly, synthesis of function summaries~\cite{hoare1971procedures,Chen2015syn,sery2012summaries} automatically generate pre-/postcondition specifications from programs to achieve modular verification and to improve verification time.
With CbC (classic CbC, \CbCBlock, or \TraitCbC), developers have the task to specify and create programs according to that specification.
Therefore, CbC is a program development approach where the developer determines the resulting program, while program synthesis generates one of possibly many programs that fulfills the specification. Contrary to this, the synthesis of a function summary generates one of possibly many specifications for a program. Synthesis has scalability limitations due to an enormous search space of programs/specifications and ambiguity of user intent.

\section{Conclusion}

In this article, we presented \CbCBlock and \TraitCbC two incremental program construction approaches that guide developers to implementations that are correct by construction.
\CbCBlock extends classic CbC with block refinement rules. These rules allow to condense the application of CbC refinement rules into one block refinement. Thus, \CbCBlock
increase flexibility in the development of programs because any sequence of statements can be introduced in a block, while still ensuring the correctness of that introduced block.
\TraitCbC uses method calls and trait composition instead of refinement rules to guarantee functional correctness.
We formalize the concept of a trait-based object-oriented language with a parametric specification language to allow a broader range of languages to adopt this concept. The main advantage of \TraitCbC is the simplicity of the refinement process that supports code reuse.
We compared classic CbC, \CbCBlock, and \TraitCbC qualitatively with regard to their programming constructs, tool support, and usability. 
\CbCBlock and \TraitCbC both relax the strict guideline of CbC without losing the benefits of a constructive program construction approach.

\hl{As future work, user studies could be conducted with all three approaches to further evaluate the usability of the approaches. We want to investigate how the more flexible construction approaches of \TraitCbC and \CbCBlock are received by developers. We also want to compare the development times and potential types of programming errors between the approaches. These user studies will help to develop concrete guidelines on which approach is appropriate under which circumstances and with which team.}

\paragraph*{Acknowledgments}
This work was partially supported by funding from the topic Engineering Secure Systems of the Helmholtz Association (HGF) and by KASTEL Security Research Labs (46.23.03).
We thank Frederik Fr\"oling for his work on \CbCBlock in his Master's Thesis.

\bibliographystyle{alphaurl}
\bibliography{bib_runge}

\newcommand{\etalchar}[1]{$^{#1}$}
\begin{thebibliography}{tBCSW18}

\bibitem[ABB{\etalchar{+}}16]{ahrendt2016deductive}
Wolfgang Ahrendt, Bernhard Beckert, Richard Bubel, Reiner H{\"a}hnle, Peter~H
  Schmitt, and Mattias Ulbrich.
\newblock {\em Deductive Software Verification--The KeY Book: From Theory to
  Practice}, volume 10001.
\newblock Springer, 2016.

\bibitem[ABD{\etalchar{+}}14]{amighi2014verification}
Afshin Amighi, Stefan Blom, Saeed Darabi, Marieke Huisman, Wojciech Mostowski,
  and Marina Zaharieva-Stojanovski.
\newblock {Verification of Concurrent Systems with VerCors}.
\newblock In {\em International School on Formal Methods for the Design of
  Computer, Communication and Software Systems}, volume 8483 of {\em Lecture
  Notes in Computer Science}, pages 172--216. Springer, 2014.

\bibitem[ABH{\etalchar{+}}10]{abrial2010rodin}
Jean-Raymond Abrial, Michael Butler, Stefan Hallerstede, Thai~Son Hoang, Farhad
  Mehta, and Laurent Voisin.
\newblock {Rodin: An Open Toolset for Modelling and Reasoning in Event-B}.
\newblock {\em International Journal on Software Tools for Technology
  Transfer}, 12(6):447--466, 2010.

\bibitem[Abr10]{EventB}
Jean-Raymond Abrial.
\newblock {\em Modeling in Event-B - System and Software Engineering}.
\newblock Cambridge University Press, 2010.

\bibitem[Bac09]{back2009invariant}
Ralph-Johan Back.
\newblock {Invariant Based Programming: Basic Approach and Teaching
  Experiences}.
\newblock {\em Formal Aspects of Computing}, 21(3):227--244, 2009.

\bibitem[Bar84]{barendregt1984lambda}
Hendrik~P Barendregt.
\newblock {\em The Lambda Calculus}, volume~3.
\newblock North-Holland Amsterdam, 1984.

\bibitem[Bar03]{barnes2003high}
John Gilbert~Presslie Barnes.
\newblock {\em {High Integrity Software: The Spark Approach to Safety and
  Security}}.
\newblock Pearson Education, 2003.

\bibitem[BCD{\etalchar{+}}05]{barnett2005boogie}
Mike Barnett, Bor-Yuh~Evan Chang, Robert DeLine, Bart Jacobs, and K~Rustan~M
  Leino.
\newblock {Boogie: A Modular Reusable Verifier for Object-Oriented Programs}.
\newblock In {\em International Symposium on Formal Methods for Components and
  Objects}, volume 4111 of {\em Lecture Notes in Computer Science}, pages
  364--387. Springer, 2005.

\bibitem[BCK{\etalchar{+}}22]{bordis2022recorc}
Tabea Bordis, Loek Cleophas, Alexander Kittelmann, Tobias Runge, Ina Schaefer,
  and Bruce~W. Watson.
\newblock {Re-CorC-ing KeY: Correct-by-Construction Software Development Based
  on KeY}.
\newblock In {\em The Logic of Software. A Tasting Menu of Formal Methods}.
  Springer, 2022.

\bibitem[BDS10]{bettini2010implementing}
Lorenzo Bettini, Ferruccio Damiani, and Ina Schaefer.
\newblock {Implementing Software Product Lines Using Traits}.
\newblock In {\em Proceedings of the 2010 ACM Symposium on Applied Computing},
  pages 2096--2102, 2010.

\bibitem[BDSS13]{bettini2013traitrecordj}
Lorenzo Bettini, Ferruccio Damiani, Ina Schaefer, and Fabio Strocco.
\newblock {TRAITRECORDJ: A Programming Language with Traits and Records}.
\newblock {\em Science of Computer Programming}, 78(5):521--541, 2013.

\bibitem[BEM07]{back2007testing}
Ralph-Johan Back, Johannes Eriksson, and Magnus Myreen.
\newblock {Testing and Verifying Invariant Based Programs in the SOCOS
  Environment}.
\newblock In {\em International Conference on Tests and Proofs (TAP)}, volume
  4454 of {\em Lecture Notes in Computer Science}, pages 61--78. Springer,
  2007.

\bibitem[BFL{\etalchar{+}}11]{Barnett:2011:SVS:1953122.1953145}
Mike Barnett, Manuel F\"{a}hndrich, K.~Rustan~M. Leino, Peter M\"{u}ller,
  Wolfram Schulte, and Herman Venter.
\newblock {Specification and Verification: The Spec\# Experience}.
\newblock {\em Communication of the ACM}, 54(6):81--91, June 2011.

\bibitem[BLS04]{barnett2004spec}
Mike Barnett, K~Rustan~M Leino, and Wolfram Schulte.
\newblock {The Spec\# Programming System: An Overview}.
\newblock In {\em International Workshop on Construction and Analysis of Safe,
  Secure, and Interoperable Smart Devices}, pages 49--69. Springer, 2004.

\bibitem[BMN14]{bono2014trait}
Viviana Bono, Enrico Mensa, and Marco Naddeo.
\newblock {Trait-Oriented Programming in Java 8}.
\newblock In {\em Proceedings of the 2014 International Conference on
  Principles and Practices of Programming on the Java platform: Virtual
  machines, Languages, and Tools}, pages 181--186, 2014.

\bibitem[BRS20]{bordis2020varcorc2}
Tabea Bordis, Tobias Runge, and Ina Schaefer.
\newblock {Correctness-by-Construction for Feature-Oriented Software Product
  Lines}.
\newblock In {\em Proceedings of the 19th ACM SIGPLAN International Conference
  on Generative Programming: Concepts and Experiences}, pages 22--34. {ACM},
  2020.

\bibitem[BW12]{back2012refinement}
Ralph-Johan Back and Joakim Wright.
\newblock {\em Refinement Calculus: A Systematic Introduction}.
\newblock Springer Science \& Business Media, 2012.

\bibitem[CDH{\etalchar{+}}09]{cohen2009vcc}
Ernie Cohen, Markus Dahlweid, Mark Hillebrand, Dirk Leinenbach, Micha{\l}
  Moskal, Thomas Santen, Wolfram Schulte, and Stephan Tobies.
\newblock {VCC: A Practical System for Verifying Concurrent C}.
\newblock In {\em International Conference on Theorem Proving in Higher Order
  Logics}, volume 5674 of {\em Lecture Notes in Computer Science}, pages
  23--42. Springer, 2009.

\bibitem[CDK{\etalchar{+}}15]{Chen2015syn}
Hong-Yi Chen, Cristina David, Daniel Kroening, Peter Schrammel, and Björn
  Wachter.
\newblock {Synthesising Interprocedural Bit-Precise Termination Proofs}.
\newblock In {\em International Conference on Automated Software Engineering
  (ASE)}, pages 53--64. IEEE, 2015.

\bibitem[CDM13]{cohen2013refinements}
Cyril Cohen, Maxime D{\'e}n{\`e}s, and Anders M{\"o}rtberg.
\newblock {Refinements for Free!}
\newblock In {\em Certified Programs and Proofs: Third International
  Conference}, pages 147--162. Springer, 2013.

\bibitem[Cha06]{Chapman:2006}
Roderick Chapman.
\newblock {Correctness by Construction: A Manifesto for High Integrity
  Software}.
\newblock In {\em Proceedings of the 10th Australian Workshop on Safety
  Critical Systems and Software - Volume 55}, SCS '05, pages 43--46, 2006.

\bibitem[CKK{\etalchar{+}}12]{cuoq2012frama}
Pascal Cuoq, Florent Kirchner, Nikolai Kosmatov, Virgile Prevosto, Julien
  Signoles, and Boris Yakobowski.
\newblock {Frama-C}.
\newblock In {\em International Conference on Software Engineering and Formal
  Methods}, volume 7504 of {\em Lecture Notes in Computer Science}, pages
  233--247. Springer, 2012.

\bibitem[Cok11]{cok2011openjml}
David~R Cok.
\newblock {OpenJML: JML for Java 7 by Extending OpenJDK}.
\newblock In {\em NASA Formal Methods Symposium}, volume 6617 of {\em Lecture
  Notes in Computer Science}, pages 472--479. Springer, 2011.

\bibitem[DDJS14]{damiani2014verifying}
Ferruccio Damiani, Johan Dovland, Einar~Broch Johnsen, and Ina Schaefer.
\newblock {Verifying Traits: An Incremental Proof System for Fine-Grained
  Reuse}.
\newblock {\em Formal Aspects of Computing}, 26(4):761--793, 2014.

\bibitem[Dij75]{Dijkstra:1975}
Edsger~W. Dijkstra.
\newblock {Guarded Commands, Nondeterminacy and Formal Derivation of Programs}.
\newblock {\em Communication of the ACM}, 18(8):453--457, August 1975.

\bibitem[Dij76]{dijkstra-book}
Edsger~W. Dijkstra.
\newblock {\em A Discipline of Programming}.
\newblock Prentice Hall, 1976.

\bibitem[DNS{\etalchar{+}}06]{ducasse2006traits}
St{\'e}phane Ducasse, Oscar Nierstrasz, Nathanael Sch{\"a}rli, Roel Wuyts, and
  Andrew~P Black.
\newblock {Traits: A Mechanism for Fine-Grained Reuse}.
\newblock {\em ACM Transactions on Programming Languages and Systems (TOPLAS)},
  28(2):331--388, 2006.

\bibitem[GJTV10]{gulwani2010component}
Sumit Gulwani, Susmit Jha, Ashish Tiwari, and Ramarathnam Venkatesan.
\newblock {Component Based Synthesis Applied to Bitvector Programs}.
\newblock Technical report, Citeseer, 2010.

\bibitem[GPS{\etalchar{+}}17]{gulwani2017program}
Sumit Gulwani, Oleksandr Polozov, Rishabh Singh, et~al.
\newblock {Program Synthesis}.
\newblock {\em Foundations and Trends in Programming Languages}, 4(1-2):1--119,
  2017.

\bibitem[Gri87]{gries-book}
David Gries.
\newblock {\em The Science of Programming}.
\newblock Springer, 1987.

\bibitem[Hal05]{hall2005fundamental}
Robert~J Hall.
\newblock {Fundamental Nonmodularity in Electronic Mail}.
\newblock {\em Automated Software Engineering}, 12(1):41--79, 2005.

\bibitem[HC02]{hall2002correctness}
Anthony Hall and Roderick Chapman.
\newblock {Correctness by Construction: Developing a Commercial Secure System}.
\newblock {\em IEEE Software}, 19(1):18--25, 2002.

\bibitem[HKKN13]{Haftmann2013}
Florian Haftmann, Alexander Krauss, Ond{\v{r}}ej Kun{\v{c}}ar, and Tobias
  Nipkow.
\newblock {Data Refinement in Isabelle/HOL}.
\newblock In {\em International Conference on Interactive Theorem Proving},
  pages 100--115. Springer, 2013.

\bibitem[HL22]{haslbeck2022few}
Maximilian~PL Haslbeck and Peter Lammich.
\newblock For a few dollars more: Verified fine-grained algorithm analysis down
  to llvm.
\newblock {\em ACM Transactions on Programming Languages and Systems (TOPLAS)},
  44(3):1--36, 2022.

\bibitem[Hoa71]{hoare1971procedures}
Charles Antony~Richard Hoare.
\newblock {Procedures and Parameters: An Axiomatic Approach}.
\newblock In {\em Symposium on Semantics of Algorithmic Languages}, pages
  102--116. Springer, 1971.

\bibitem[IPW01]{igarashi2001featherweight}
Atsushi Igarashi, Benjamin~C Pierce, and Philip Wadler.
\newblock {Featherweight Java: A Minimal Core Calculus for Java and GJ}.
\newblock {\em ACM Transactions on Programming Languages and Systems (TOPLAS)},
  23(3):396--450, 2001.

\bibitem[JSP10]{jacobs2010quick}
Bart Jacobs, Jan Smans, and Frank Piessens.
\newblock {A Quick Tour of the VeriFast Program Verifier}.
\newblock In {\em Asian Symposium on Programming Languages And Systems}, volume
  6461 of {\em Lecture Notes in Computer Science}, pages 304--311. Springer,
  2010.

\bibitem[KFFD86]{kohlbecker1986hygienic}
Eugene Kohlbecker, Daniel~P Friedman, Matthias Felleisen, and Bruce Duba.
\newblock {Hygienic Macro Expansion}.
\newblock In {\em Proceedings of the 1986 ACM Conference on LISP and Functional
  Programming}, pages 151--161, 1986.

\bibitem[KRMJ16]{khazeev2016initial}
Mansur Khazeev, Victor Rivera, Manuel Mazzara, and Leonard Johard.
\newblock {Initial Steps Towards Assessing the Usability of a Verification
  Tool}.
\newblock In {\em International Conference in Software Engineering for Defence
  Applications}, volume 717 of {\em Advances in Intelligent Systems and
  Computing}, pages 31--40. Springer, 2016.

\bibitem[KRS20]{knuppel2020scaling}
Alexander Kn{\"u}ppel, Tobias Runge, and Ina Schaefer.
\newblock {Scaling Correctness-by-Construction}.
\newblock In {\em International Symposium on Leveraging Applications of Formal
  Methods}, pages 187--207. Springer, 2020.

\bibitem[KW12]{kourie2012correctness}
Derrick~G Kourie and Bruce~W Watson.
\newblock {\em The Correctness-by-Construction Approach to Programming}.
\newblock Springer Science \& Business Media, 2012.

\bibitem[LBR98]{leavens1998jml}
Gary~T Leavens, Albert~L Baker, and Clyde Ruby.
\newblock {JML: a Java Modeling Language}.
\newblock In {\em Formal Underpinnings of Java Workshop (at OOPSLA’98)},
  pages 404--420. Citeseer, 1998.

\bibitem[Lei95]{leino1995toward}
K~Rustan~M Leino.
\newblock {\em {Toward Reliable Modular Programs}}.
\newblock California Institute of Technology, 1995.

\bibitem[Lei10]{leino2010dafny}
K~Rustan~M Leino.
\newblock {Dafny: An Automatic Program Verifier for Functional Correctness}.
\newblock In {\em International Conference on Logic for Programming Artificial
  Intelligence and Reasoning}, pages 348--370. Springer, 2010.

\bibitem[LS08]{liquori2008feathertrait}
Luigi Liquori and Arnaud Spiwack.
\newblock {FeatherTrait: A Modest Extension of Featherweight Java}.
\newblock {\em ACM Transactions on Programming Languages and Systems (TOPLAS)},
  30(2):1--32, 2008.

\bibitem[LT12]{lammich2012applying}
Peter Lammich and Thomas Tuerk.
\newblock {Applying Data Refinement for Monadic Programs to Hopcroft’s
  Algorithm}.
\newblock In {\em Interactive Theorem Proving: Third International Conference,
  ITP 2012}, pages 166--182. Springer, 2012.

\bibitem[LW94]{liskov}
Barbara~H Liskov and Jeannette~M Wing.
\newblock {A Behavioral Notion of Subtyping}.
\newblock {\em ACM Transactions on Programming Languages and Systems (TOPLAS)},
  16(6):1811--1841, 1994.

\bibitem[Mey88]{meyer1988eiffel}
Bertrand Meyer.
\newblock {Eiffel: A Language and Environment for Software Engineering}.
\newblock {\em Journal of Systems and Software}, 8(3):199--246, 1988.

\bibitem[Mey92]{meyer1992applying}
Bertrand Meyer.
\newblock {Applying ``Design by Contract''}.
\newblock {\em Computer}, 25(10):40--51, 1992.

\bibitem[Mor94]{morgan-book}
Carroll Morgan.
\newblock {\em Programming from Specifications}.
\newblock Prentice Hall, 2nd edition, 1994.

\bibitem[MW80]{Manna1980}
Zohar Manna and Richard Waldinger.
\newblock {A Deductive Approach to Program Synthesis}.
\newblock {\em ACM Transactions on Programming Languages and Systems (TOPLAS)},
  2(1):90--121, jan 1980.

\bibitem[OCW03]{oliveira2003arcangel}
Marcel Vinicius~Medeiros Oliveira, Ana Cavalcanti, and Jim Woodcock.
\newblock {ArcAngel: A Tactic Language for Refinement}.
\newblock {\em Formal Aspects of Computing}, 15(1):28--47, 2003.

\bibitem[OCW09]{oliveira2009utp}
Marcel Oliveira, Ana Cavalcanti, and Jim Woodcock.
\newblock {A UTP Semantics for Circus}.
\newblock {\em Formal Aspects of Computing}, 21(1):3--32, 2009.

\bibitem[OGC08]{oliveira2008crefine}
Marcel Vinicius~Medeiros Oliveira, Alessandro~Cavalcante Gurgel, and C~G
  Castro.
\newblock {CRefine: Support for the Circus Refinement Calculus}.
\newblock In {\em 2008 Sixth IEEE International Conference on Software
  Engineering and Formal Methods}, pages 281--290. IEEE, Nov 2008.

\bibitem[PG13]{pearce2013whiley}
David~J Pearce and Lindsay Groves.
\newblock {Whiley: A Platform for Research in Software Verification}.
\newblock In {\em International Conference on Software Language Engineering},
  volume 8225 of {\em Lecture Notes in Computer Science}, pages 238--248.
  Springer, 2013.

\bibitem[PKSL16]{Polikarpova2016}
Nadia Polikarpova, Ivan Kuraj, and Armando Solar-Lezama.
\newblock {Program Synthesis from Polymorphic Refinement Types}.
\newblock {\em ACM SIGPLAN Notices}, 51(6):522--538, aug 2016.

\bibitem[PR01]{plath2015elevator}
Malte Plath and Mark Ryan.
\newblock {Feature Integration Using a Feature Construct}.
\newblock {\em Science of Computer Programming}, 41(1):53--84, 2001.

\bibitem[RBTS21]{runge2021user}
Tobias Runge, Tabea Bordis, Thomas Th{\"u}m, and Ina Schaefer.
\newblock {Teaching Correctness-by-Construction and Post-hoc Verification--The
  Online Experience}.
\newblock In {\em Formal Methods Teaching Workshop}, volume 13122 of {\em
  Lecture Notes in Computer Science}, pages 101--116. Springer, 2021.

\bibitem[RKS{\etalchar{+}}22]{runge2022ifbcoo}
Tobias Runge, Alexander Kittelmann, Marco Servetto, Alex Potanin, and Ina
  Schaefer.
\newblock {Information Flow Control-by-Construction for an Object-Oriented
  Language}.
\newblock In {\em International Conference on Software Engineering and Formal
  Methods}, volume 13550 of {\em Lecture Notes in Computer Science}, pages
  209--226. Springer, 2022.

\bibitem[RKTS20]{runge2020lattice}
Tobias Runge, Alexander Kn{\"u}ppel, Thomas Th{\"u}m, and Ina Schaefer.
\newblock {Lattice-Based Information Flow Control-by-Construction for
  Security-by-Design}.
\newblock In {\em FormaliSE@ICSE 2020: 8th International Conference on Formal
  Methods in Software Engineering}, pages 44--54. {ACM}, 2020.

\bibitem[RPTS22]{runge2022traits}
Tobias Runge, Alex Potanin, Thomas Th{\"u}m, and Ina Schaefer.
\newblock {Traits: Correctness-by-Construction for Free}.
\newblock In {\em International Conference on Formal Techniques for Distributed
  Objects, Components, and Systems}, volume 13273 of {\em Lecture Notes in
  Computer Science}, pages 131--150. Springer, 2022.

\bibitem[RSC{\etalchar{+}}19]{runge2019tool}
Tobias Runge, Ina Schaefer, Loek Cleophas, Thomas Th{\"u}m, Derrick Kourie, and
  Bruce~W Watson.
\newblock {Tool Support for Correctness-by-Construction}.
\newblock In {\em International Conference on Fundamental Approaches to
  Software Engineering}, volume 11424 of {\em Lecture Notes in Computer
  Science}, pages 25--42. Springer, 2019.

\bibitem[RTC{\etalchar{+}}19]{runge2019user}
Tobias Runge, Thomas Th{\"u}m, Loek Cleophas, Ina Schaefer, and Bruce~W Watson.
\newblock {Comparing Correctness-by-Construction with Post-Hoc Verification - A
  Qualitative User Study}.
\newblock In {\em Formal Methods. FM 2019 International Workshops. Refine},
  volume 12233 of {\em Lecture Notes in Computer Science}, pages 388--405.
  Springer, 2019.

\bibitem[SD05]{smith2005chai}
Charles Smith and Sophia Drossopoulou.
\newblock {Chai: Traits for Java-Like Languages}.
\newblock In {\em European Conference on Object-Oriented Programming}, pages
  453--478, Berlin, Heidelberg, 2005. Springer Berlin Heidelberg.

\bibitem[SFS12]{sery2012summaries}
Ondrej Sery, Grigory Fedyukovich, and Natasha Sharygina.
\newblock {Interpolation-Based Function Summaries in Bounded Model Checking}.
\newblock In {\em Hardware and Software: Verification and Testing}, volume 7261
  of {\em Lecture Notes in Computer Science}, pages 160--175. Springer, 2012.

\bibitem[SH19]{steinhofel2019abstract}
Dominic Steinh{\"o}fel and Reiner H{\"a}hnle.
\newblock {Abstract Execution}.
\newblock In {\em International Symposium on Formal Methods}, pages 319--336.
  Springer, 2019.

\bibitem[STAL11]{scholz2011intlist}
Wolfgang Scholz, Thomas Th\"{u}m, Sven Apel, and Christian Lengauer.
\newblock {Automatic Detection of Feature Interactions Using the Java Modeling
  Language: An Experience Report}.
\newblock In {\em Proceedings of the 15th International Software Product Line
  Conference, Volume 2}, SPLC '11, New York, NY, USA, 2011. Association for
  Computing Machinery.

\bibitem[SWL{\etalchar{+}}94]{stickel1994deductive}
Mark Stickel, Richard Waldinger, Michael Lowry, Thomas Pressburger, and Ian
  Underwood.
\newblock {Deductive Composition of Astronomical Software from Subroutine
  Libraries}.
\newblock In {\em International Conference on Automated Deduction}, volume 814
  of {\em Lecture Notes in Computer Science}, pages 341--355. Springer, 1994.

\bibitem[tBCSW18]{terBeek2018xbc}
Maurice~H. ter Beek, Loek Cleophas, Ina Schaefer, and Bruce~W. Watson.
\newblock {X-by-Construction}.
\newblock In {\em International Symposium on Leveraging Applications of Formal
  Methods}, pages 359--364, Cham, 2018. Springer International Publishing.

\bibitem[TFNP15]{tschannen2015autoproof}
Julian Tschannen, Carlo~A Furia, Martin Nordio, and Nadia Polikarpova.
\newblock {AutoProof: Auto-Active Functional Verification of Object-Oriented
  Programs}.
\newblock In {\em International Conference on Tools and Algorithms for the
  Construction and Analysis of Systems}, volume 9035 of {\em Lecture Notes in
  Computer Science}, pages 566--580. Springer, 2015.

\bibitem[TSAH12]{thuem2012familybased}
Thomas Th\"{u}m, Ina Schaefer, Sven Apel, and Martin Hentschel.
\newblock {Family-Based Deductive Verification of Software Product Lines}.
\newblock In {\em Proceedings of the 11th International Conference on
  Generative Programming and Component Engineering}, GPCE '12, page 11–20,
  New York, NY, USA, 2012. Association for Computing Machinery.

\bibitem[WKSC16]{watson2016correctness}
Bruce~W. Watson, Derrick~G. Kourie, Ina Schaefer, and Loek Cleophas.
\newblock {Correctness-by-Construction and Post-hoc Verification: A Marriage of
  Convenience?}
\newblock In {\em International Symposium on Leveraging Applications of Formal
  Methods}, volume 9952 of {\em Lecture Notes in Computer Science}, pages
  730--748. Springer, 2016.

\bibitem[WRH{\etalchar{+}}12]{wohlin2012experimentation}
Claes Wohlin, Per Runeson, Martin H{\"o}st, Magnus~C Ohlsson, Bj{\"o}rn
  Regnell, and Anders Wessl{\'e}n.
\newblock {\em Experimentation in Software Engineering}.
\newblock Springer Science \& Business Media, 2012.

\bibitem[ZOC09]{zeyda2009proofpower}
Frank Zeyda, Marcel Oliveira, and Ana Cavalcanti.
\newblock {Supporting ArcAngel in ProofPower}.
\newblock {\em Electronic Notes in Theoretical Computer Science}, 259:225--243,
  2009.

\end{thebibliography}

\appendix
\section{Interview Questions}
\label{appendix:questions}

\begin{enumerate}[align=left]
	\item Which task was more difficult and why?
	\item Which tasks were solved?
	\item What were the biggest problems during the development?
	\item Is the development according to CbC understandable?
	\item Is the use of the block rules understandable?
	\item Is the introduction of the block rules reasonable?
	\item Would you use the block rules when implementing according to CbC?
	\item How do you like the development in the textual editor?
	\item How do you like the development in the graphical editor?
	\item Is the textual or the graphical editor preferred?
	\item Which elements from the editors are particularly helpful or inadequate and why?
	\item What functionalities are still missing in the editors?
	\item What would it take for you to develop according to CbC in your workday?
\end{enumerate}

\end{document}